 \newtheorem{lemma}{Lemma}
  \newtheorem{corollary}{Corollary}
  \newtheorem{theorem}{Theorem}
\newcommand{\halfrightsect}[2]{[#1,#2)}
\newcommand{\halfleftsect}[2]{(#1,#2]}
\newcommand{\eps}{\varepsilon}
\newcommand{\tO}{\widetilde{O}}
\newcommand{\pred}{\mathrm{pred}}
\newcommand{\successor}{\mathrm{succ}}
\newcommand{\rank}{\mathrm{rank}}
\newenvironment{proof}{\trivlist\item[]\emph{Proof}:}%
{\unskip\nobreak\hskip 1em plus 1fil\nobreak$\Box$
\parfillskip=0pt%
\endtrivlist}
\begin{document}
\title{Space Efficient Multi-Dimensional Range Reporting}
\author{Marek Karpinski\thanks{Email {\tt marek@cs.uni-bonn.de}.} \and 
Yakov Nekrich\thanks{Email {\tt yasha@cs.uni-bonn.de}.} 
}
\date{Dept. of Computer Science\\ University of Bonn.}
\maketitle

\begin{abstract}
\tolerance=1000
We present a data structure that supports three-dimensional 
range reporting queries in $O(\log \log U + (\log \log n)^3+k)$ time and uses 
$O(n\log^{1+\eps} n)$ space, where $U$ is the size of the universe, 
$k$ is the number of points in the answer,
 and $\eps$ is an arbitrary constant. This result improves over  the data 
structure 
of Alstrup, Brodal, and Rauhe (FOCS 2000) that uses $O(n\log^{1+\eps} n)$ 
space 
and supports queries in $O(\log n+k)$ time, the data structure of Nekrich 
(SoCG'07) that uses $O(n\log^{3} n)$ space and supports queries in 
$O(\log \log U + (\log \log n)^2 + k)$ time,   and the data structure of 
Afshani (ESA'08) that uses $O(n\log^{3} n)$ space and also supports queries 
in 
$O(\log \log U + (\log \log n)^2 + k)$ time but relies on randomization 
during the preprocessing stage. 
Our result allows us to significantly reduce the space usage of the 
fastest previously known static and incremental $d$-dimensional data 
structures, $d\geq 3$, 
 at a cost of  increasing  the query time by a negligible 
$O(\log \log n)$ factor.
\end{abstract}

\section{Introduction}
The range reporting problem is to store a set of $d$-dimensional
points $P$ in a  data structure, so that for a query rectangle $Q$
all points in $Q\cap P$ can be reported. In this paper we
significantly improve the space usage and pre-processing time of the
fastest previously known static and semi-dynamic data structures for
orthogonal range reporting  with only a negligible increase
in the query time.

The range reporting is extensively studied at least since 1970s; the
history of this problem is rich with different trade-offs between
query time and space usage.  Static range reporting queries can be
answered in $O(\log^d n + k)$ time and $O(n\log^{d-1} n)$ space using
range trees~\cite{B80} known since 1980; here and further $n$ denotes the 
number  of points in  $P$ and   $k$ denotes
the number of points from $P$ in the query rectangle.  The query time can 
be reduced to $O(\log^{d-1} n + k)$ time by applying the fractional
cascading technique of Chazelle and Guibas~\cite{CG86} designed in
1985.  The space usage was further improved by Chazelle~\cite{Ch88}.
In 90s, Subramanian and Ramaswamy~\cite{SR95} and Bozanis, Kitsios,
Makris, and Tsakalidis~\cite{BKMT97} showed that $d$-dimensional
queries can be answered in $\tO(\log^{d-2} n + k)$ time\footnote{We
  define $\tO(f(n))=O(f(n)\log^c(f(n)))$ for a constant $c$.} at a cost
 of higher space
usage: their data structures use $O(n\log^{d-1} n)$ and $O(n\log^{d}
n)$ space respectively.  Alstrup, Brodal, and Rauhe~\cite{ABR00}
designed a data structure that answers queries in $\tO(\log^{d-2} n + k)$
time and uses $O(n\log^{d-2+\eps} n)$ space for an arbitrary constant
$\eps>0$.  Nekrich~\cite{N07} reduced the query time by
$\tO(\log n)$ factor and presented a data structure that answers
queries in $O(\log^{d-3} n/( \log \log n)^{d-5} + k)$ time for $d>3$.
Unfortunately, the data structure of~\cite{N07} uses
$O(n\log^{d+1+\eps} n)$ space. Recently, Afshani~\cite{A08}
reduced the space usage to $O(n\log^{d+\eps} n)$; however his data
structure uses randomization (during the preprocessing stage).  
In this paper we present a data
structure that matches the space efficiency of~\cite{ABR00} at a cost
of increasing the query time by a negligible $O(\log \log n)$ factor:
our data structure supports queries in $O(\log^{d-3} n /(\log \log
n)^{d-6} + k)$ time and uses $O(n\log^{d-2+\eps} n)$ space for $d>3$. 
See Table~\ref{table:stat} for a more precise comparison of different
results.

\tolerance=1000
Our result for $d$-dimensional range reporting is obtained as a
corollary of a three-dimensional data structure that supports
queries in $O(\log \log U + (\log \log n)^3 + k)$ time and uses
$O(n\log^{1+\eps} n)$ space, where  $U$ is the size of the universe, 
i.e. all point coordinates are positive integers bounded by $U$. 
Our three-dimensional data structure is
to be compared with the data structure of~\cite{ABR00} that also uses
$O(n\log^{1+\eps} n)$ space but answers queries in $O(\log n + k)$
time and the data structure of~\cite{N07} that answers queries in
$O(\log \log U + (\log \log n)^3 + k)$ time but needs $O(n\log^4 n)$
space. See Table~\ref{table:3dim} 
for a more extensive comparison
with previous results.  A corollary of our result is an efficient
semi-dynamic data structure that supports three-dimensional queries in
$\tO(\log n + k)$ time and insertions in $O(\log^5 n)$ time. 
Thus we improve the space usage and the update time of fastest previously known  semi-dynamic data structure~\cite{N07}
that supports insertions in $O(\log^8 n)$ time.

\begin{table}[t]
\centering
{
\fontsize{11}{12}
  \selectfont
\begin{tabular}{|l|c|c|} \hline
Source & Query Time & Space \\ \hline
\cite{B80}   & $O(\log^d n +k)$     & $O(n\log^{d-1} n)$ \\ 
\cite{CG86}   & $O(\log^{d-1} n +k)$ & $O(n\log^{d-1} n)$ \\ 
\cite{Ch88} & $O(\log^{d-1} n +k)$ & $O(n\log^{d-2+\eps} n)$ \\ 
\cite{SR95} & $O(\log^{d-2} n\log^{**} n  +k)$ & $O(n\log^{d-1} n)$ \\ 
\cite{BKMT97}  & $O(\log^{d-2} n +k)$ & $O(n\log^{d} n)$ \\ 
\cite{ABR00}  & $O(\log^{d-2} n/(\log \log n)^{d-3} +k)$ & $O(n\log^{d-2+\eps} n)$ \\ 
\cite{N07} & $O(\log^{d-3} n/(\log \log n)^{d-5} +k)$ & $O(n\log^{d+1+\eps} n)$ \\
\cite{A08}$\dagger$ & $O(\log^{d-3} n/(\log \log n)^{d-5} +k)$ & $O(n\log^{d+\eps} n)$ \\ 
This paper & $O(\log^{d-3} n/(\log \log n)^{d-6} +k)$ & $O(n\log^{d-2+\eps} n)$ \\
\hline 
\end{tabular}
}
\caption{\label{table:stat}
Data structures in $d>3$ dimensions; $\dagger$ indicates that a data 
structure is randomized. We define $\log^*(n)=\min\{\,t\,|\,\log^{(t)}n\leq 1\,\}$ and $\log^{**}n=\min\{\,t\,|\, \log^{*(t)} n\leq 1 \,\}$ where $\log^{*(t)}n$ denotes computing $\log^*$ $t$ times.}
\end{table}  

\tolerance=1500 
If we are ready to pay penalties for each point in the
answer, the space usage can be further reduced: we describe a data
structure that uses $O(n\log^{d-2} n (\log \log n)^3)$ space and
answers queries in $O(\log^{d-3} n (\log \log n)^3 + k\log \log n)$
time. We can also use this data structure to answer emptiness queries
(to determine whether query rectangle $Q$ contains points from $P$)
and one-reporting queries (i.e. to report an \emph{arbitrary} point
from $P\cap Q$ if $P\cap Q\not=\emptyset$).  This is an $\tO(\log n)$
factor improvement in query time over the data structure
of Alstrup et.\ al.~\cite{ABR00}.  Other similar data structures are either slower or
require higher penalties for each point in the answer.


 \begin{table}[tb]
 \centering
 {
 \fontsize{11}{12}
   \selectfont
 \begin{tabular}{|l|c|c|} \hline
 Source & Query Time & Space \\ \hline
 \cite{Ch88} & $O(\log^{2} n +k)$ & $O(n\log^{1+\eps} n)$ \\ 
 \cite{SR95} & $O(\log n\log^{**} n  +k)$ & $O(n\log^{2} n)$ \\ 
 \cite{BKMT97}  & $O(\log n +k)$ & $O(n\log^{3} n)$ \\ 
 \cite{ABR00}  & $O(\log n +k)$ & $O(n\log^{1+\eps} n)$ \\ 
 \cite{N07} & $O(\log \log U +(\log \log n)^2  +k)$ & $O(n\log^{4+\eps} n)$ \\
 \cite{A08}$\dagger$ & $O(\log \log U +(\log \log n)^2 +k)$ & $O(n\log^{3} n)$ \\ 
 This paper & $O(\log \log U +(\log \log n)^3 +k)$ & $O(n\log^{1+\eps} n)$ \\
 \hline 
 \end{tabular}
 }
 \caption{\label{table:3dim}
 Three-dimensional data structures; $\dagger$ indicates that a data 
 structure is randomized.}
 \end{table}  

Throughout this paper, $\eps$ denotes an arbitrarily small constant,
and $U$ denotes the size 
of the universe. If each point in
the answer can be output in constant time, we will sometimes say that
the query time is $O(f(n))$ (instead of $O(f(n)+k)$).
We let $[a,b]$ denote the set of integers $\{i| a\leq i \leq b\}$;
The intervals $\halfrightsect{a}{b}$ and $\halfleftsect{a}{b}$ denote 
the same set as $[a,b]$ but without $a$ (resp.\ without $b$). 
We denote by  $[b]$ the set $[1,b]$.
 
In section~\ref{sec:3d1} we describe a space efficient data structure
for three-dimensional range reporting on a $[U]\times [U]\times [U]$ grid, 
i.e. in the case when all point coordinates belong to $[U]$.  In
section~\ref{sec:3d2} we describe a variant of our data structure that
uses less space but needs $O(\log \log n)$ time to output each point
in the answer. All results of this paper are valid in the word RAM
 computation model.



\section{Preliminaries}
\label{sec:prelim}
We use the same notation as in~\cite{VV96} to denote the special cases 
of three-dimensional range reporting queries: 
a product of three half-open intervals will be called a (1,1,1)-sided query;
a product of a closed interval and two half-open intervals 
will be called a (2,1,1)-sided query; a product of two closed intervals 
and one half-open interval (resp.\ three closed intervals) will 
be called a (2,2,1)-sided (resp.\ (2,2,2)-sided) query. 
Clearly (1,1,1)-sided queries 
are equivalent to  dominance reporting queries, and (2,2,2)-sided query 
is the general three-dimensional query.
The following  transformation is described 
in e.g.~\cite{VV96} and~\cite{SR95}. 
\begin{lemma}\label{lemma:transf}
Let $1\leq a_i\leq b_i \leq 2$ for $i=1,2,3$. 
A data structure that answers  $(a_1,a_2,a_3)$ queries 
in $O(q(n))$ time,  uses $O(s(n))$ space, and can be constructed in  
$O(c(n))$ time  
can be transformed into a data structure that answers 
 $(b_1,b_2,b_3)$ queries in $O(q(n))$ time, uses $O(s(n)\log^t n)$ space 
and can be constructed in $O(c(n)\log^t n)$ time 
for $t=(b_1-a_1)+(b_2-a_2)+(b_3-a_3)$.
\end{lemma}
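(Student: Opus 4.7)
The plan is to induct on $t = (b_1-a_1)+(b_2-a_2)+(b_3-a_3)$; the base case $t=0$ is trivial, and the inductive step adds one unit of ``closed-ness'' in some coordinate, paying one $\log n$ factor in space and preprocessing while keeping the query time unchanged. So it suffices to show: if we have an $(a_1,a_2,a_3)$ structure with parameters $q(n),s(n),c(n)$ and some $a_i=1$, then we can build an $(a_1,\ldots,a_i+1,\ldots,a_3)$ structure with parameters $q(n),s(n)\log n, c(n)\log n$.

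The construction I would use is the standard range-tree trick along the $i$-th axis. After rank-reducing the $i$-th coordinates to $[n]$, build a balanced binary tree $T$ whose leaves are the rank values. For every node $v\in T$, let $P_v$ be the set of input points whose $i$-th coordinate lies in the subtree of $v$, and on $P_v$ construct the given $(a_1,a_2,a_3)$-sided structure, with the $i$-th coordinate interpreted as the half-open side. Each input point belongs to at most $O(\log n)$ sets $P_v$ (one per level of $T$), so $\sum_v |P_v|=O(n\log n)$; using superadditivity of $s$ and $c$ this yields total space $O(s(n)\log n)$ and preprocessing time $O(c(n)\log n)$.

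To answer a query with a \emph{closed} interval $\sect{\ell}{r}$ on the $i$-th axis (together with whatever sides the other two axes require), I would locate the lowest common ancestor $u$ in $T$ of the leaves corresponding to $\rank(\ell)$ and $\rank(r)$, then issue one $(a_1,a_2,a_3)$-sided query on the left child's substructure of $u$ using the half-open condition $x_i\geq \ell$, and a symmetric one on the right child's substructure using $x_i\leq r$; the union of the two answers is exactly the answer to the original query. This costs $2\cdot q(n)=O(q(n))$ plus the LCA lookup.

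The main obstacle is making sure the LCA lookup takes $O(1)$ time rather than $O(\log n)$, since otherwise we would blow up the query time by a $\log n$ factor per dimension we promote, which the statement forbids. On the word RAM this is a standard trick: with the leaves of $T$ identified with integers $1,\ldots,n$ in in-order, the LCA depth of two rank values $\rank(\ell)$ and $\rank(r)$ is determined by the position of the most significant bit of $\rank(\ell)\oplus\rank(r)$, which is computable in constant time; from the depth we recover $u$ and its two relevant children, together with precomputed pointers to the associated secondary structures. Iterating the single-step construction $t$ times and composing the overheads yields exactly the claimed $O(s(n)\log^t n)$ space, $O(c(n)\log^t n)$ preprocessing, and unchanged $O(q(n))$ query bound.
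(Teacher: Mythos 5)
The paper states this lemma without proof, merely citing \cite{VV96} and \cite{SR95}; your construction --- a balanced binary tree over the promoted coordinate with a secondary structure at each node, splitting a closed query interval at the LCA of its two endpoints into two half-open queries on the LCA's children, with constant-time LCA via the most significant bit of the XOR of the leaf ranks --- is exactly the standard argument from those references, and it is correct. The only point worth flagging is the implicit assumption that $s$ and $c$ are superadditive so that the per-level sums telescope to $O(s(n))$ and $O(c(n))$; this holds for every bound to which the lemma is applied in the paper.
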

 
We say that a set $P$ is on a grid of size $n$ if all coordinates of all 
points in $P$ belong to an interval $[n]$. 
We will  need the following folklore result:
\begin{lemma}\label{lemma:folk}
There exists a $O(n^{1+\eps})$ space data structure that supports 
 range reporting queries on a $d$-dimensional grid 
of size $n$ for any constant $d$ 
in $O(k)$ time.
\end{lemma}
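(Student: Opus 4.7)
The plan is to proceed by induction on the dimension $d$. For $d=1$, store the points in a sorted array together with a lookup table mapping each grid value in $[n]$ to the rank of its successor; a range query then reduces to two constant-time lookups followed by a linear scan, giving $O(n)$ space and $O(k+1)$ time.

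For the inductive step, assume a $(d-1)$-dimensional structure matching the claim exists for any smaller constant $\eps'<\eps$. On the first-coordinate axis, build a balanced tree $T$ of constant depth $D=\lceil 2d/\eps\rceil$ and fanout $f=n^{\eps/(2d)}$ whose leaves are the individual grid values. For every internal node $v$ of $T$ and every contiguous range of children of $v$ (there are $O(f^2)$ per node), preprocess, via the induction hypothesis, a $(d-1)$-dimensional range reporting structure on the last $d-1$ coordinates of those points whose first coordinate falls in that canonical range.

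To answer a query $[x_1,x_2]\times R$, locate the LCA $v^*$ of $x_1$ and $x_2$ in $T$; walking the two root-to-leaf paths from $v^*$, the interval $[x_1,x_2]$ decomposes into at most $2D$ canonical children-ranges, one per level of each path. Each such range is resolved by a single $(d-1)$-dimensional query on the associated precomputed structure, costing $O(k_i+1)$; summing gives $O(D)+O(k)=O(k+1)$ since $D$ is constant.

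For space, note that at a fixed level of $T$ the node-subtrees partition the $n$ points and each point belongs to $O(f^2)$ canonical ranges at that level, so the combined sizes of all substructures across the tree are $O(nf^2 D)=O(n^{1+\eps/d})$. Since an individual substructure has at most $n$ points and, by induction, uses $O(n^{1+\eps'})$ space, the worst case---concentrating input into ranges of full size---gives $O(n^{\eps/d})$ such ranges and total space $O(n^{1+\eps'+\eps/d})$. Choosing $\eps'=\eps(d-1)/d$ keeps the bound at $O(n^{1+\eps})$, and since $d$ is constant the recursion closes across all $d$ levels. The main technical obstacle is precisely this parameter juggling---distributing the target $\eps$ uniformly across the $d$ recursive calls so that the multiplicative overhead $n^{\eps/d}$ per level compounds to at most $n^{\eps}$ overall; the remaining ingredients (sorted 1D lookup, LCA in a constant-depth tree, and canonical-range decomposition) are entirely standard.
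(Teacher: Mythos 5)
Your overall architecture is the same as the paper's: a balanced tree of constant depth and fanout $n^{\Theta(\eps)}$ on one coordinate, canonical-range decomposition of the query interval into $O(1)$ pieces, and recursion on the dimension down to a $1$-dimensional base case (the paper phrases this as ``range trees with node degree $\rho=n^{\eps}$, height $h(n)=\log n/\log\rho=O(1)$''). The query-time argument is fine.

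There is, however, a genuine gap in the space accounting. Your induction hypothesis is applied as if the space of a substructure were a function of its \emph{point count} alone (``an individual substructure has at most $n$ points and, by induction, uses $O(n^{1+\eps'})$ space,'' followed by a convexity argument over the $m_i$). But the lemma as you are inducting on it concerns a grid whose size equals the number of points, whereas your substructures contain $m\ll n$ points on a universe that is still $[n]$ in the remaining coordinates. Your own construction betrays this: the base case stores ``a lookup table mapping each grid value in $[n]$ to the rank of its successor,'' which costs $\Theta(n)$ regardless of $m$, and each recursive tree has ``the individual grid values'' as leaves, i.e.\ $f^D=n$ leaves and $\Omega(n/f)$ internal nodes even when the substructure holds $O(1)$ points. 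Since one level of the outer tree already produces $\Theta(nf)$ substructures (about $n/f$ internal nodes, $O(f^2)$ contiguous children-ranges each), charging $\Omega(n/f)$ or $\Omega(n)$ universe-dependent overhead to each of them gives $\Omega(n^2)$ total space, and rank-reducing each substructure to make the hypothesis applicable reintroduces exactly these $\Theta(n)$-size lookup tables (a predecessor structure instead would break the $O(1)$ query conversion). The repair is standard and is what the paper implicitly does: make the per-substructure cost \emph{linear in the number of stored points} with only an $n^{O(\eps)}$ multiplicative overhead, e.g.\ by using a trie that materializes only nonempty nodes (each point pays $O(n^{\eps})$ per trie node on its root-to-leaf path, of which there are $O(1/\eps)$), and strengthen the induction to ``$m$ points on a universe of size $n$ in $O(m\cdot n^{c\eps})$ space.'' With that restatement your summation $\sum_i m_i=O(n f^2 D)$ per recursion level immediately gives the claimed $O(n^{1+\eps})$ bound, and the convexity step becomes unnecessary.
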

\begin{proof}
One dimensional range reporting queries on the 
$[n]\times [n]\times [n]$ grid  can be answered 
in $O(k)$ time using a trie with node degree $n^{\eps}$. 
Using range trees~\cite{B80} with node degree $\rho$ we can transform 
a $d$-dimensional $O(s(n))$ space data structure into a $(d+1)$-dimensional 
data structure that uses $O(s(n)h(n)\cdot \rho)$ space and
 answers range reporting 
queries in $O(q(n)h(n))$ time, where $h(n)=\log n/\log \rho$ is the height 
of the range tree. Since $\rho=n^{\eps}$, $h(n)=O(1)$. 
Hence, the query time does not depend on dimension and the space usage 
increases by a factor $O(n^{\eps})$ with each dimension.
\end{proof}
We use  Lemma~\ref{lemma:folk} to obtain a data structure that 
 supports queries that are a product of a $(d-1)$-dimensional query on a 
universe of size $n^{1-\eps}$ 
and  a half-open interval. We will show in the next Lemma that such 
queries can be answered in $O(n)$ space and $O(1)$ time. 
\begin{lemma}\label{lemma:muchspace}
There exists a $O(n)$ space data structure that supports 
 range reporting queries of the form $Q'\times\halfrightsect{-\infty}{x}$ 
in $O(k)$ time, 
where $Q'$ is a $(d-1)$-dimensional query on $[U_1]\times[ U_2]\times\ldots\times
 [U_{d-1}]$ and $U_1\cdot U_2 \cdot\ldots\cdot U_{d-1}= O(n^{1-\eps}) $.
\end{lemma}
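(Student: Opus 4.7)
Plan. Let $M=U_1\cdots U_{d-1}=O(n^{1-\eps})$. The plan is to exploit the fact that the first $d-1$ coordinates of any point of $P$ fall into one of at most $M$ \emph{cells}. I group the $n$ input points by cell: for each non-empty cell $c$, I store the list $L_c$ of points projecting to $c$, sorted in ascending order by $d$-th coordinate; across all cells these lists occupy $O(n)$ space. Let $y_c$ denote the smallest $d$-th coordinate appearing in $L_c$, and form the representative set $R=\{(c_1,\ldots,c_{d-1},y_c)\}$, which has $|R|\leq M$ points. After rank-reducing the last coordinate of $R$ to $[M]$ (there are at most $M$ distinct values of $y_c$), $R$ sits on a $d$-dimensional grid of size $M$, and Lemma~\ref{lemma:folk} yields a range reporting structure for $R$ of size $O(M^{1+\eps'})$. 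Choosing $\eps'$ small enough so that $(1-\eps)(1+\eps')\leq 1$ keeps the total space $O(n)$.

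To answer a query $Q'\times\halfrightsect{-\infty}{x}$, I first translate $x$ into its rank among the values $\{y_c\}$, and then feed $Q'\times\halfrightsect{-\infty}{x}$ to the representative structure, which enumerates precisely those cells $c\in Q'$ with $y_c<x$. For each returned cell $c$, I walk $L_c$ from its first element and emit points while the $d$-th coordinate remains below $x$, stopping the walk as soon as this fails. Correctness is immediate: a point $p\in P$ satisfies the query iff its cell $c$ lies in $Q'$ and $p$'s $d$-th coordinate is below $x$; the latter forces $y_c<x$, so $c$ is returned by the representative structure and $p$ itself is enumerated during the walk of $L_c$.

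The $O(k)$ time bound rests on the following charging argument: every returned cell contributes at least one point to the output (the first element of $L_c$, whose $d$-th coordinate is $y_c<x$), so the number of visited cells is at most $k$, and the walk through each visited cell incurs only $O(1)$ extra time on the single final disqualifying point. The space accounting is $O(n)$ for the per-cell lists, $O(M)=O(n)$ for the rank-reduction information on $\{y_c\}$, and $O(M^{1+\eps'})=O(n)$ for the representative structure. The main subtlety I expect is the simultaneous tuning of $\eps'$ in Lemma~\ref{lemma:folk} (strictly below $\eps/(1-\eps)$) together with a constant-time translation of $x$ into the rank-reduced universe of the $y_c$-values; this is the one place where the argument has to be carried out carefully in order to preserve both the $O(n)$ space bound and the pure $O(k)$ query guarantee.
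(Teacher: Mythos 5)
Your proof follows essentially the same route as the paper's: store the minimum-$z$ representative point of each projection cell in the folklore structure of Lemma~\ref{lemma:folk}, keep a $z$-sorted list per cell, and charge the walk of each visited list to the at least one point it must report. The only difference is that you spell out the rank reduction of the last coordinate and the $\eps'$ tuning needed to invoke Lemma~\ref{lemma:folk} within $O(n)$ space, details the paper's proof leaves implicit; this is a correct and slightly more careful rendering of the same argument.
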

\begin{proof}
There are $O(n^{1-\eps})$ possible projections of points onto the 
first $d-1$ coordinates. Let $\min(p_1,\ldots,p_{d-1})$ denote the 
point with minimal $d$-th coordinate among all points whose first 
$d-1$ coordinates equal to $p_1,p_2,\ldots,p_{d-1}$. 
We store  points $\min(p_1,\ldots,p_{d-1})$ for all 
$p_1\in [U_1]$,$p_2\in [U_2]$,$\ldots$,$p_{d-1}\in [U_{d-1}]$ in a 
data structure $M$. Since $M$ contains $O(n^{1-\eps})$ points, 
we can use Lemma~\ref{lemma:folk} and implement $M$ in $O(n)$ space. 
For all possible  $p_1\in [U_1]$,$p_2\in [U_2]$,$\ldots$,
$p_{d-1}\in [U_{d-1}]$ 
we also store a list $L(p_1,\ldots, p_{d-1})$ 
of points whose first $d-1$ coordinates are 
$p_1,\ldots,p_{d-1}$; points in  $L(p_1,\ldots, p_{d-1})$ are  
sorted by their  $d$-th coordinates. 
Given a query $Q=Q'\times\halfrightsect{-\infty}{x}$, we  first  
answer $Q$ using the data structure $M$. Since $M$ contains $O(n^{1-\eps})$ 
points, we can find all points in $M\cap Q$ in $O(|M\cap Q|)$ time. 
Then, for every point  
$p=(p_1,\ldots,p_{d-1},p_d)$ found 
with help of $M$, we traverse the corresponding list $L(p_1,\ldots,p_{d-1})$ 
and report all points in this list whose last coordinate does not exceed 
$x$. 
\end{proof}

In several places of our proofs we will use the 
\emph{reduction to rank space} technique~\cite{GBT84,Ch88}. 
This technique allows us to replace coordinates of a point by its rank. 
Let $P_x$, $P_y$, and $P_z$ be the sets of $x$, $y$-, and $z$-coordinates 
of points from $P$. For a point $p=(p_x,p_y,p_z)$, let 
 $p'=(\rank(p_x,P_x),\rank(p_y,P_y),\rank(p_z,P_z))$, 
where $\rank(e,S)$ is defined as the number of elements in $S$ that 
are smaller than or equal to $e$. 
A point $p$ belongs to an interval $[a,b]\times [c,d] \times [e,f]$ 
if and only if a point $p'$ belongs to an interval 
$[a',b']\times [c',d'] \times [e',f']$ where 
$a'=\successor(a,P_x)$, $b'=\pred(b,P_x)$, 
$c'=\successor(c,P_y)$, $d'=\pred(d,P_y)$, 
$e'=\successor(e,P_z)$, $f'=\pred(f,P_z)$, 
and $\successor(e,S)$  ($\pred(e,S)$) denotes 
the smallest (largest) element in S that is greater (smaller) 
than or equal to $e$. 
Reduction to rank space can be used to reduce range reporting queries to 
range reporting on the $[n]\times [n]\times [n]$ grid: Suppose  we can 
find $\pred(e,s)$ and $\successor(e,S)$ for any $e$, where $S$ is $P_x$, $P_y$, 
or $P_z$, in time $f(n)$. Suppose that range reporting queries on 
$[n]\times [n]\times [n]$ grid can be answered in time $O(g(n)+k)$. 
Then we can answer range reporting queries in $O(f(n)+g(n)+k)$ time. 
Following~\cite{ABR00}, we can also use the reduction to rank space 
 technique to reduce the space usage: if a data structure contains $m$ elements, reduction to rank space 
allows us to store each element in $O(\log m)$ bits.

\section{Space Efficient Three-Dimensional Data Structure}
\label{sec:3d1}
In this section we describe a data structure that supports three-dimensional 
range reporting queries in $O((\log \log n)^3 +\log \log U + k)$ time
where $U$ is the universe size and uses $O(n\log^{1+\eps} n)$ space.  
Our data structure  combines  the recursive divide-and-conquer approach 
introduced in~\cite{ABR00}, the result of Lemma~\ref{lemma:muchspace}, 
 and the transformation of 
$(a_1,a_2,a_3)$-queries into $(b_1,b_2,b_3)$-queries described in 
Lemma~\ref{lemma:transf}. We start with a description of a space efficient 
modification of the data structure for (1,1,1)-sided queries on the 
$[n]\times [n]\times [n]$ grid. 
Then, we obtain data structures for $(2,1,1)$-sided and $(2,2,1)$-sided 
queries on the $[n]\times [n]\times [n]$ grid using the recursive 
divide-and-conquer and 
Lemma~\ref{lemma:muchspace}. Finally, we obtain the data structure 
that supports arbitrary orthogonal queries on the 
$[n]\times [n]\times [n]$ grid using Lemma~\ref{lemma:transf}.
Reduction to rank space technique
described in  section~\ref{sec:prelim} allows us to 
transform a data structure 
on the $[n]\times [n]\times [n]$  grid   into a data structure on the 
$[U]\times [U]\times [U]$ grid , so that the query time increases by an 
additive term  
$O(\log \log U)$ and the space usage is not increased.
\begin{lemma}\label{lemma:tbound}~\cite{N07}
Given a set of three-dimensional points $P$ and a parameter $t$, 
we can construct in $O(n\log^3 n)$ time  a $O(n)$ space data structure  
$T$ that supports the following queries on a grid of size $n$: \\
(i) for a given query point $q$, $T$ determines in $O((\log \log n)^2)$ time 
whether $q$ is dominated by at most $t$ points of $P$\\
(ii) if  $q$ is dominated by at most 
$t$ points from $P$, $T$ outputs in $O(t + (\log \log n)^2)$ 
time a list $L$ of $O(t)$ points such  that $L$ 
contains all points of $P$ that dominate $q$.
\end{lemma}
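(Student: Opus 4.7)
The plan is to implement $T$ as a $t$-shallow cutting for 3D dominance together with a linear-space point-location structure whose query cost is doubly logarithmic.

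\textbf{Construction.} I sort $P$ by $z$-coordinate in decreasing order and partition the sequence into $m=\lceil n/t\rceil$ contiguous blocks $B_1,\ldots,B_m$ of $\Theta(t)$ points each. For each $i\in[m]$ let $z_i$ denote the smallest $z$-coordinate appearing in $B_1\cup\cdots\cup B_i$, so that the $z_i$'s partition $\RSet$ into slabs $[z_{i+1},z_i)$, and let $S_i$ be the 2D xy-dominance maximal staircase of $B_1\cup\cdots\cup B_i$, with vertex list $L_i$. A careful refinement rule that splits a slab whenever its staircase grows too large keeps $|L_i|=O(t)$ while preserving $m=O(n/t)$, so that $\sum_i|L_i|=O(n)$. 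Together with the ancillary predecessor structures described below, preprocessing fits in $O(n\log^3 n)$ time and $O(n)$ space.

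\textbf{Correctness of the conflict lists.} Suppose $q=(q_x,q_y,q_z)$ is dominated by at most $t$ points. Each dominator has $z$-coordinate at least $q_z$, so if $i$ is the slab index with $z_{i+1}\le q_z<z_i$ then every dominator lies in $B_1\cup\cdots\cup B_i$. Because each is also xy-dominated by $(q_x,q_y)$, each dominator appears on the xy-staircase $S_i$, hence in $L_i$. Thus $L_i$ has size $O(t)$ and contains all points of $P$ that dominate $q$, which establishes part (ii) once $i$ and its staircase cell are known.

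\textbf{Point location.} To find $i$ and the relevant cell I use two nested levels of predecessor search. A predecessor structure on $\{z_1,\ldots,z_m\}$ returns the slab index in $O(\log\log n)$ time. Inside that slab I need, for part (i), to decide whether the xy-staircase $S_i$ witnesses at most $t$ points dominating $(q_x,q_y)$; this is itself a 2D shallow-cutting test and can be implemented in $O(\log\log n)$ time, but in order to share storage among all $m$ slabs within $O(n)$ space I layer the structure so that the query must traverse it $O(\log\log n)$ times, giving the overall $O((\log\log n)^2)$ bound. For (ii), I then enumerate $L_i$ in $O(t)$ additional time; if (i) has already reported that $q$ has more than $t$ dominators, I simply return failure.

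\textbf{Main obstacle.} The crux is the simultaneous balance of three constraints: (a) every $L_i$ has at most $O(t)$ elements; (b) the number of slabs is $O(n/t)$, so that $\sum_i|L_i|=O(n)$; and (c) the point-location structure fits in $O(n)$ space while answering in $O((\log\log n)^2)$ time. The first two are delicate because a staircase can grow unpredictably as points are appended, so the refinement rule must be amortized carefully against the block partition. The third is what forces the extra $\log\log n$ factor: a linear-space predecessor search distributed across $O(n/t)$ heterogeneous staircases seems to require a hierarchical structure of depth $\log\log n$, each level costing $O(\log\log n)$ per query; avoiding the blow-up in space that would come from maintaining an independent van-Emde-Boas-style index per slab is the main technical hurdle of the construction.
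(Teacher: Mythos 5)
The paper does not actually prove this lemma: it is imported verbatim from~\cite{N07} (note the citation attached to the statement), so there is no in-paper argument to compare yours against. Judged on its own terms, your construction has a genuine correctness gap at its core. You define $L_i$ as the vertex list of the \emph{maximal} staircase $S_i$ of $B_1\cup\cdots\cup B_i$ and then argue that every dominator of $q$, being an xy-dominator of $(q_x,q_y)$, ``appears on the xy-staircase $S_i$, hence in $L_i$.'' That inference is false: a point $p$ can dominate $(q_x,q_y)$ without being a maximal point of the set (take $p$ and $p'$ both dominating $q$ with $p'$ dominating $p$ in the $xy$-projection; then $p$ is a dominator but not a staircase vertex). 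So $L_i$ omits precisely the non-maximal dominators, and the guarantee of part~(ii) --- that $L$ contains \emph{all} points of $P$ dominating $q$ --- fails. What you need is a genuine conflict list in the sense of a $t$-shallow cutting: for each cell, the set of all points of $P$ dominating some point of that cell. For such lists the simultaneous bounds $|L_i|=O(t)$ and $m=O(n/t)$ are exactly the content of a shallow-cutting existence theorem, which your ``careful refinement rule'' names but does not supply. (There is also an off-by-one: with $z_{i+1}\le q_z<z_i$, a dominator may lie in $B_{i+1}$, whose points have $z$-coordinates up to $z_i$.)

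The second half of the proposal is likewise too thin to count as a proof. The point-location step --- locating $q$ among $O(n/t)$ heterogeneous cells in $O((\log\log n)^2)$ time using only $O(n)$ space overall --- is exactly where the $(\log\log n)^2$ term in the statement comes from, and you yourself flag it as ``the main technical hurdle'' without resolving it; asserting that one can ``layer the structure'' is not an argument. Part~(i) is never established either: deciding whether $q$ has at most $t$ dominators by inspecting staircase vertices above $(q_x,q_y)$ undercounts for the same reason as above. If you want a self-contained proof you should either reproduce the construction of~\cite{N07} or exhibit a deterministic $t$-shallow cutting for 3D dominance together with an explicit linear-space, doubly-logarithmic point-location structure; as written, the proposal identifies the right shape of the solution but proves neither of its two load-bearing claims.
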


As described in~\cite{N07}, Lemma~\ref{lemma:tbound} allows us to answer 
(1,1,1)-sided queries in $O((\log \log n)^2)$ time 
and $O(n\log n)$ space.  We can reduce the space usage to 
$O(n\log \log n)$ using an idea that is also used in~\cite{A08}.
\begin{lemma}\label{lemma:domin}
There exists a data structure that answers (1,1,1)-sided queries on 
$[n]\times [n]\times [n]$ grid  
in $O((\log \log n)^2 + k)$ time, uses $O(n\log \log n)$ space, 
and can be constructed in $O(n\log^3 n \log \log n)$ time. 
\end{lemma}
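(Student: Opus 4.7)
Proof proposal: My strategy is to replace the single-threshold construction of~\cite{N07}, which instantiates Lemma~\ref{lemma:tbound} with $t=\Theta(\log n)$ to obtain $O((\log\log n)^2+k)$ query time at $O(n\log n)$ space, by a geometric hierarchy of $O(\log\log n)$ thresholds, each backed by an $O(n)$-space copy of the structure from Lemma~\ref{lemma:tbound}. Set $t_j=2^{2^j}$ for $j=0,1,\ldots,\lceil\log\log n\rceil$, and let $T_j$ denote the Lemma~\ref{lemma:tbound} structure built on the full point set $P$ with parameter $t_j$. Since every $T_j$ uses $O(n)$ space and is built in $O(n\log^3 n)$ time, summing over the $O(\log\log n)$ levels yields the claimed $O(n\log\log n)$ space bound and $O(n\log^3 n\log\log n)$ construction bound.

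To answer a dominance query $q$, the plan is to first identify the smallest index $j^*$ such that $q$ is dominated by at most $t_{j^*}$ points of $P$, by applying operation~(i) of the $T_j$'s in order (an exponential probe $T_0,T_1,T_2,\ldots$). Each probe costs $O((\log\log n)^2)$. Once $j^*$ is located, I invoke operation~(ii) of $T_{j^*}$ in $O(t_{j^*}+(\log\log n)^2)$ time to obtain a candidate superset $L$ of size $O(t_{j^*})$ containing every point of $P$ that dominates $q$, and filter $L$ against $q$ to produce the exact answer.

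The main obstacle will be simultaneously bounding the probing cost by $O((\log\log n)^2)$ and the filtering cost by $O(k)$. The doubly-exponential schedule $t_j=2^{2^j}$ ensures that $j^*=O(\log\log k+1)$, so the total probing cost is $O((\log\log k+1)(\log\log n)^2)$, which can be charged against the $(\log\log n)^2+k$ budget in all but a narrow regime. The real difficulty lies in the filter step: at the boundary $k\approx t_{j^*-1}+1$, a naive scan of $L$ would cost $|L|=O(t_{j^*})=O(t_{j^*-1}^2)=O(k^2)$. To circumvent this, I would equip each $T_j$ with an auxiliary rank-space index on its candidate lists so that dominators within $L$ can be reported in $O(k+(\log\log n)^2)$ time via Lemma~\ref{lemma:folk} (applied to the $n^\eps$-sized grid produced by reduction to rank space) rather than by brute-force scanning. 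Because the auxiliary index adds only $O(n)$ space per level, the overall space remains $O(n\log\log n)$, and the construction time remains dominated by the $O(n\log^3 n)$ per-level cost of Lemma~\ref{lemma:tbound}.
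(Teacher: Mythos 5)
Your high-level plan --- build $O(\log\log n)$ copies of the structure of Lemma~\ref{lemma:tbound} at geometrically growing thresholds, probe them in increasing order to locate the right scale, then extract and filter a candidate list --- is exactly the paper's. But your choice of a \emph{doubly}-exponential schedule $t_j=2^{2^j}$ starting at a constant creates two genuine problems that the paper's schedule avoids, and your patches for them do not hold up. The first is the filtering step: as you yourself note, $|L|=O(t_{j^*})=O(t_{j^*-1}^2)$ can be $\Theta(k^2)$. Your fix --- a rank-space index over ``the candidate lists'' of each $T_j$, queried via Lemma~\ref{lemma:folk} --- is not substantiated: Lemma~\ref{lemma:tbound} is a black box whose output list $L$ depends on the query point, so there is no small, pre-enumerable family of lists to index; and even granting one, Lemma~\ref{lemma:folk} costs $O(N^{1+\eps})$ space on $N$ points, so the claim that the auxiliary index adds only $O(n)$ space per level is unjustified. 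The paper needs none of this: it uses ratio-$4$ thresholds $t_i=2^{2i}$, so that $|L|=O(t_l)=O(4t_{l-1})=O(k)$ and a brute-force scan of $L$ already runs in $O(k)$ time.

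The second problem is the probing cost. With $t_0=O(1)$ there is a regime --- $k$ superconstant but at most $(\log\log n)^{2}$, say --- in which $j^*=\Theta(\log\log k)=\omega(1)$ probes are needed while $k$ is far too small to absorb the $j^*(\log\log n)^2$ probing cost, so the total running time exceeds $O((\log\log n)^2+k)$ by a $\log\log\log\log n$-type factor. Conceding that this happens only ``in a narrow regime'' does not discharge it: the lemma claims a worst-case bound, and controlling precisely such polyloglog factors is the point of the paper. The paper handles this by starting the thresholds at $t_{i_{\min}}=(\log\log n)^{\Theta(1)}$ (with $i_{\min}=2\log\log\log n$), so that a single probe covers every $k$ up to polyloglog, and whenever two or more probes are needed one already has $k=\Omega((\log\log n)^4)$, which dominates the worst-case total probing cost of $O((\log\log n)^3)$. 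It also caps the thresholds at $t=\log n$ and falls back, for $k=\Omega(\log n)$, on a linear-space $O(\log n+k)$-time structure \cite{ChE87}, charging the $O(\log n)$ term to $k$. Adopting these three parameter choices (ratio-$4$ growth, polyloglog starting threshold, $\log n$ cap with a fallback) repairs your argument; the doubly-exponential schedule cannot easily be salvaged.
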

\begin{proof}
For each parameter $t=2^{2i}$, $i=i_{\min},i_{\min+1},\ldots, 
\log \log n/2$,  $i_{\min}=2 \log \log \log n$,
we construct a data structure $T_i$ of Lemma~\ref{lemma:tbound}. 
Given a query point $q$, we examine data structures $T_i$, 
$i= i_{\min},i_{\min+1},\ldots, \log \log n/2$ until  $q$ is dominated by at 
most $2^{2i}$ points of $P$ or the last data structure $T_i$ is examined. 
Thus we  identify the index $l$, such that $q$ is
 dominated by more than $2^{2l}$ and less than $2^{2l+2}$ points 
or determine that $q$ is dominated by at least $\log n$ points. 
If $l=i_{\min}$, then $q$ is dominated by $O((\log \log n)^2)$ points. 
We can generate in $O((\log \log n)^2)$ time a list 
$L$ of $O((\log \log n)^2)$ points that contains all points dominating $q$. 
Then, we examine all  points in $L$ and output all points that dominate 
$q$ in $O((\log \log n)^2)$ time.  
If $\log \log n/2 > l  >i_{\min}$, we can  examine data structures 
$T_{i_{\min}}$, $T_{i_{\min}+1}$,$\ldots$, $T_l$ in  
$O((l-i_{\min})(\log \log n)^2)$ time. 
Then, we  generate the list $L$ that contains all points that dominate 
$q$ in $O(2^{2l})$ time. We can process $L$ and output all $k$ points that 
dominate $q$ in $O(2^{2l})$ time. Since $k> 2^{2l-2}$, 
$k=\Omega(2^{2l})$ and $k=\Omega((l-i_{\min})\cdot(\log \log n)^2)$. 
Hence, the  query is answered in $O(k)$ time.  
If $l=\log \log n/2$, then $q$ is dominated by $\Omega(\log n)$ points. 
in this case we can use 
a linear space data structure with $O(\log n)$ query time, e.g. the data 
structure of Chazelle and Edelsbrunner~\cite{ChE87}, to answer the 
query in $O(\log n + k)=O(k)$ time.

Since each data structure $T_i$ uses linear space, the space usage 
of the described data structure is $O(n\log \log n)$.
\end{proof}

\begin{lemma}\label{lemma:two11}
There exists a data structure that answers (2,1,1)-sided queries on 
$[n]\times [n]\times [n]$ grid 
in $O( (\log \log n)^3 + k)$ time, uses 
$O(n\log^{\eps} n)$ space, 
and can be constructed in $O(n\log^{3} n \log \log n)$ time. 
\end{lemma}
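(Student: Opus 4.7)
\begin{proofsk}
The plan is a recursive divide-and-conquer on the $x$-axis, with each level's middle-$x$ subquery handled by a secondary structure that internally invokes Lemma~\ref{lemma:muchspace}.

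At the top level I will partition the points by $x$-coordinate into $g = n^{1-\eps'}$ slabs of $n^{\eps'}$ points each, where $\eps' < \eps$ is a small constant to be tuned. For each slab $S_i$ I will build a (1,1,1)-dominance structure via Lemma~\ref{lemma:domin} (for the partial-slab cases $S_a$ and $S_b$) and a recursive (2,1,1) structure on $S_i$'s points (for the degenerate case $S_a = S_b$). I will then build a single \emph{middle-$x$ handler} that answers queries of the form (slab-index in a closed range, $y \le c$, $z \le d$) on all $n$ points with slab-indices in $[g]$. The handler in turn partitions the $y$-axis into $h = n^{1-\eps'}/g$ groups so that the product $g h \le n^{1-\eps'}$ satisfies exactly the $(d-1)$-dimensional universe bound required by Lemma~\ref{lemma:muchspace}; it stores one Lemma~\ref{lemma:muchspace} structure on the triples $(\mathrm{slab}(p), \mathrm{group}(p), p_z)$ using $O(n)$ space (for the middle-$y$-group subquery) together with a recursive copy of the handler on each $y$-group's $n/h$ points (for the partial-$y$-group case $G_c$). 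On a query, I will locate $S_a, S_b, G_c$ by predecessor search in $O(\log\log n)$; if $S_a = S_b$ I recurse into the slab's recursive (2,1,1) structure, and otherwise I issue the two (1,1,1)-dominance queries plus one call to the handler, which itself makes one Lemma~\ref{lemma:muchspace} invocation and at most one recursive descent into the group $G_c$.

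Because each recursive branch descends at most once per level and the problem size shrinks by a polynomial factor each step, the recursion depth is $O(\log\log n)$ both in the outer slab-recursion and in the inner group-recursion. Summing the per-level $O((\log\log n)^2)$ cost of the (1,1,1)-dominance queries yields the target $O((\log\log n)^3 + k)$ query bound. For space, the outer recurrence $T(n) = O(n\log\log n) + g\, T(n/g) + H(n)$ and the inner recurrence $H(m) = O(m) + H(m/h)$ each involve a single recursive call per level, so unrolling produces only a $\mathrm{polyloglog}(n)$ multiplicative overhead on top of the $O(n\log\log n)$ contribution from Lemma~\ref{lemma:domin}; by choosing $\eps'$ small enough this is absorbed into $\log^{\eps} n$ for any fixed $\eps > 0$. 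The hardest point will be verifying that the handler genuinely needs no second $x$-decomposition: namely that inside the handler the $x$-coordinate is already a small-universe slab-index in $[g]$ and so is carried "for free" by the Lemma~\ref{lemma:muchspace} structure built on $(\mathrm{slab}, \mathrm{group}, z)$, which is what prevents the space recurrence from picking up a second recursive call and blowing up to $\Omega(n\log n)$. The preprocessing bound $O(n\log^3 n \log\log n)$ will then follow by summing the Lemma~\ref{lemma:domin} construction cost over all recursive calls at all levels.
\end{proofsk}
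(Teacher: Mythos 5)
There is a genuine gap, and it sits exactly where you flagged ``the hardest point.'' First, your parameters are degenerate: with $g=n^{1-\eps'}$ slabs the number of $y$-groups is $h=n^{1-\eps'}/g=1$, so the handler's Lemma~\ref{lemma:muchspace} structure on $(\mathrm{slab},\mathrm{group},z)$ cannot resolve the constraint $y\le c$ at all (every point is in the single group), and the ``recursive copy on the partial group $G_c$'' is the same $n$-point problem again --- the inner recursion makes no progress. Second, and more fundamentally, even after repairing the parameters the claim that the slab-index is ``carried for free'' inside the recursive handler fails. Lemma~\ref{lemma:muchspace} requires the product of the first $d-1$ universe sizes to be $O(m^{1-\eps})$ where $m$ is the number of points \emph{stored in that structure}. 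In the handler's recursive copy on a $y$-group with $m=n/h$ points, the slab universe is still $[g]$; you would need $g\cdot h'\le (n/h)^{1-\eps}$, which fails after $O(1)$ levels of the inner recursion (and immediately if $g=n^{1-\eps'}$). So the handler \emph{does} need a second $x$-decomposition (or at least a rank-space reduction of the slab indices that shrinks with the group), which reintroduces exactly the recursive-copy-per-level duplication your space recurrence was designed to avoid.

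The paper resolves this tension differently: it cuts both axes into slices of $n^{1/2+\gamma}$ points, so the coarse grid of cells has universe product $n^{1-2\gamma}$ and a single Lemma~\ref{lemma:muchspace} structure $D_t$ handles the block of cells fully contained in the query in $O(1)$ time; the partial $x$-slices get $(1,1,1)$-structures as in your scheme, but the partial $y$-slice gets a \emph{full recursive $(2,1,1)$-sided structure} on its $n^{1/2+\gamma}$ points, which re-partitions the $x$-axis at the appropriate finer scale. The price is that each point lives in $2^i$ structures at recursion level $i$; the paper pays for this with periodic reduction to rank space (every $\delta\log\log n$ levels), which caps each point's representation at roughly $(1/2+\gamma)^i\log n$ bits and yields the $O(n\log^{\eps} n)$ word bound. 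If you want to pursue your one-sided-handler idea you must either shrink the slab universe inside each $y$-group (bringing you back to the paper's doubling) or accept that the middle-$x$ subproblem is itself a full $(2,1,1)$ instance on a coarse first coordinate, which is what the paper's $Q_4$ recursion is.
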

\begin{proof}
  We divide the grid into $x$-slices $X_i=[x_{i-1},x_i]\times [n]\times
  [n]$ and $y$-slices $Y_j=[n]\times [y_{j-1},y_j]\times [n]$, so that each
  $x$-slice contains $n^{1/2+\gamma}$ points and each $y$-slice
  contains $n^{1/2+\gamma}$ points; the value of a constant $\gamma$
  will be specified below.  The cell $C_{ij}$ is the intersection of
  the $i$-th $x$-slice and the $j$-th $y$-slice, $C_{ij}=X_i\cap Y_j$.
  The data structure $D_t$ contains a point $(i,j,z)$ for each point
  $(x,y,z)\in P\cap C_{ij}$. Since the first two coordinates of points
  in $D_t$ are bounded by $n^{1/2-\gamma}$, $D_t$ uses $O(n)$ space
  and supports (2,1,1)-sided queries in constant time by
  Lemma~\ref{lemma:muchspace}.  For each $x$-slice $X_i$ there are two
  data structures that support two types of (1,1,1)-sided queries,
  open in $+x$ and in $-x$ directions.  For each $y$-slice $Y_j$,
  there is a data structure that supports $(1,1,1)$-sided queries open
  in $+y$ direction.  For each $y$-slice $Y_j$ and for each $x$-slice
  $X_i$ there are recursively defined data structures.
  Recursive subdivision stops when the number of elements in a data 
  structure is smaller than a predefined constant. Hence, the number of 
  recursion levels is $v\log \log n$ for $v=\log_{\frac{2}{1+2\gamma}}2$.

Essentially we apply the idea of~\cite{ABR00} to three-dimensional $(2,1,1)$-sided queries. If a query spans more than one $x$-slab and more than one 
$y$-slab, then it can be answered by answering two $(1,1,1)$-sided queries, 
one special $(2,1,1)$-sided query that can be processed using the 
technique of Lemma~\ref{lemma:muchspace}, and one $(2,1,1)$-sided 
query to a data structure with $n^{1/2+\gamma}$ points. If a query 
is contained in a slab, then it can be answered by a data structure 
that contains $n^{1/2+\gamma}$ points. We will show below that 
the query time is $O((\log \log n)^3)$.  Each point is stored in 
$O(2^i)$ data structures on recursion level $i$, but space usage 
can be reduced because the number of points in data structures 
quickly decreases with the recursion level. 
We will show below that every point in a data structure on recursion level 
$i$ can be stored with approximately 
$(\log n/2^i)\log^{\eps'}n$ bits for an arbitrarily 
small $\eps'$.  

{\bf Query Time.} Given a query $Q=[a,b]\times\halfleftsect{-\infty}{c}\times
\halfleftsect{-\infty}{d}$ we identify the indices $i_1$, $i_2$, and $j_1$ 
such that projections of all cells $C_{ij}$, $i_1<i <i_2$, $j<j_1$, are
entirely contained in $[a,b]\times\halfleftsect{-\infty}{c}$.  Let
$a_0=x_{i_1}$, $b_0=x_{i_2-1}$, and $c_0=y_{j_1-1}$.  The query $Q$
can be represented as $Q =Q_1\cup Q_2\cup Q_3\cup Q_4$, where
$Q_1=[a_0,b_0]\times \halfleftsect{-\infty}{c_0}\times
\halfleftsect{-\infty}{d}$, $Q_2= \halfrightsect{a}{a_0} \times
\halfleftsect{-\infty}{c}\times \halfleftsect{-\infty}{d}$,
$Q_3=\halfleftsect{b_0}{b}\times \halfleftsect{-\infty}{c} \times
\halfleftsect{-\infty}{d}$, and
$Q_4=[a_0,b_0]\times\halfleftsect{c_0}{c}\times\halfleftsect{-\infty}{d}$.
See Fig.~\ref{fig:exam211} for an example. 
Query $Q_1$ can be answered using $D_t$. Queries $Q_2$ and $Q_3$ can
be represented as $Q_2 = (\halfrightsect{-\infty}{a_0} \times
\halfleftsect{-\infty}{c}\times \halfleftsect{-\infty}{d})\cap
X_{i_1}$ and $Q_3= (\halfleftsect{-\infty}{b}\times
\halfleftsect{-\infty}{c} \times \halfleftsect{-\infty}{d})\cap
X_{i_2}$; hence, $Q_2$ and $Q_3$ are equivalent to $(1,1,1)$-sided
queries on $x$-slices $X_{i_1}$ and $X_{i_2}$.  The query $Q_4$ can be
answered by a recursively defined data structure for the $y$-slice
$Y_{j_1}$ because $Q_4=
([a_0,b_0]\times\halfleftsect{-\infty}{c}\times\halfleftsect{-\infty}{d})
\cap Y_{j_1}$.  If $i_1=i_2$ and the query $Q$ is contained in one
$x$-slice, then $Q$ is processed by a recursively defined data
structure for the corresponding $x$-slice.  Thus a query is reduced 
to one special case that can be processed in constant time, two 
$(1,1,1)$-sided  queries, and one (2,1,1)-sided  
query answered by a data structure 
that contains $n^{1/2+\gamma}$ elements.

\begin{figure*}[t]
\centering
\includegraphics[width=80mm]{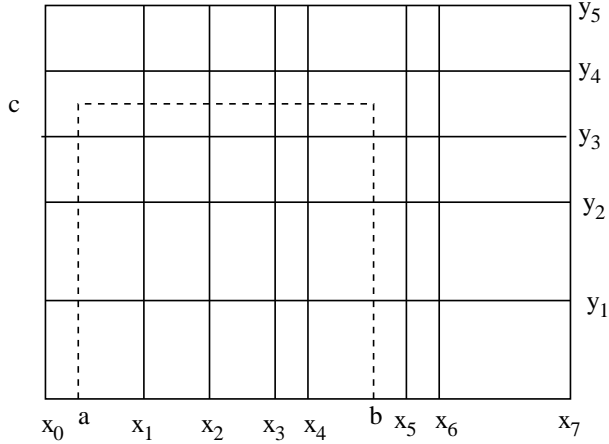}
\caption{\label{fig:exam211} Example of a $(2,1,1)$-sided query projected 
onto the $xy$-plane. $i_1=1$, $i_2= 5$, $j_1=4 $ and 
$a_0=x_1$, $b_0=x_4$, $c_0=y_3$.}
\end{figure*}

Queries $Q_2$ and $Q_3$ can be answered in $O((\log \log n)^2)$ time,
 the query $Q_1$ can be answered in constant time. The query
$Q_4$ is answered by a recursively defined data structure that
contains $O(n^{1/2+\gamma})$ elements.  
If $i_1=i_2$ or $j_1=1$, i.e. if $Q$ is entirely contained in one $x$-slice or one 
$y$-slice, then the query is answered by a data structure for the corresponding slice that contains $O(n^{1/2+\gamma})$ elements. Hence, the query time 
$q(n) = O((\log \log n)^2) + q(n^{1/2+\gamma})$ and $q(n)=O((\log \log
n)^3)$.

{\bf Space Usage.} 
The data structure consists of $O(\log \log n)$ recursion levels. The total number 
of points in all data structures on the $i$-th recursion level is $2^i n$. 
Hence all data structures on the $i$-th recursion level require $O(2^i n \log n)$
bits of space. 
The space usage can be reduced by applying the reduction to rank space
 technique~\cite{GBT84,Ch88}. 
As explained in section~\ref{sec:prelim}, reduction to rank space allows us
 to replace  point coordinates
by their  ranks. Hence, if we use this technique with a data structure 
that contains $m$ elements, each point can be specified with $O(\log m)$ bits.
Thus, we can reduce the space usage by replacing point coordinates 
by their ranks on certain recursion levels. 
 
We apply reduction to rank space on every $\delta\log \log n$-th recursion level for $\delta=\eps/3$. 
Let $V$ be an arbitrary data structure on recursion level
 $r=s\delta\log \log n-1$  for 
$1\leq s \leq (1/\delta) \log_{\frac{2}{1+2\gamma}} 2 $. 
Let $W$ be the set of points that belong to 
 an $x$-slice or a $y$-slice of $V$.
We store a dictionary that enables us to find for each point 
$p=(p_x,p_y,p_z)$ from $W$ 
a point $p'=(p_x',p_y',p_z')$ where $p'_x=\rank(p_x,W_x)$,
$p'_y=\rank(p_y,W_y)$, $p'_z=\rank(p_z,W_z)$, and $W_x$,$W_y$, and $W_z$ 
are the sets of $x$-, $y$-, and $z$-coordinates of all points in $W$.
Let $W'$ be the set of all points $p'$. 
Conversely there is also a dictionary that enables us to find for a point 
$p'\in W'$ the corresponding $p\in W$.  
The data structure that answers queries on $W$ stores points 
in the rank space of $W$. 
In general, all data structures on recursion levels 
$r, r+1,\ldots, r+\delta\log \log n-1$ obtained by subdivision 
of $W$ store points in rank space of $W$. 
That is, point coordinates in all those data structures are integers 
bounded by $|W|$.
If such a data structure $R$ is used to answer  a query $Q$, then 
for each point $p_R\in R\cap Q$, we must find the corresponding point 
$p\in P$. Since range reduction was applied $O(1)$ time, we can find 
for any $p_R\in R$ the corresponding $p\in P$ in $O(1)$ time.

Each  data structure on level $r=s\delta\log \log n$  for 
$0\leq s \leq (1/\delta)  v  $ and 
$v= \frac{1}{\log (2/(1+2\gamma)) }$ 
contains $O(n^{l})$ elements for $l=(1/2+\gamma)^{r}$. 
Hence an arbitrary element of a data structure on level $r$ can be specified 
with  $l\cdot \log n$ bits.  
The total number of elements in all 
data structures on the $r$-th level is $n2^{r}$. 
Hence all elements in all data structures on the $r$-th recursion level need 
$O(n2^r((\frac{1+2\gamma}{2})^r)\log n \log \log n)$ bits.

We choose $\gamma$ so that $(1+2\gamma)\leq 2^{\delta/2}$. 
Then $v  
= \frac{1}{1-\log_2 (1+2\gamma)}\geq\frac{1}{1-\delta/2}$ and $(1+2\gamma) \leq 2^{\delta/2} \leq 2^{\delta-\delta^2/2}
\leq 2^{\delta/v}=2^{\eps/3v}$. 
Since $r\leq v\log \log n$,  
$(1+2\gamma)^r\leq 2^{(\eps/3)\log \log n}\leq \log^{\eps/3} n$. 
Therefore all data structures on level $r$ use 
$\log^{\eps/3} n \cdot O(n\log n \log \log n)=
O(n\log^{1+2\eps/3}n)$ bits of space or $O(n\log^{2\eps/3}n)$ words 
of $\log n$ bits.
The number of elements in all data structures on levels 
$r+1,r+2,\ldots$ increases by a factor two in each level. 
Hence, the total space (measured in words) needed for all data structures 
on all levels $q$, $r\leq q < r+ \delta\log \log n$, 
is $(\sum_{f=1}^{\delta\log \log n -1} 2^f) O(n\log^{2\eps/3}n)=
O(n2^{\delta\log\log n} n\log^{2\eps/3}n) =O(n\log^{\eps} n)$ 
because  $\delta\leq \eps/3$ and  $2^{\delta\log\log n} \leq 
\log^{\eps/3} n$. 
Thus all data structures in a group of $\delta \log \log n$ 
consecutive recursion levels use $O(n\log^{\eps} n)$ words of space.
Since there are $(1/\delta)v=O(1)$ such groups of levels, the total space 
usage is $O(n\log^{\eps} n)$.

{\bf Construction Time.} 
The data structure on level 0 (the topmost recursion level) 
can be constructed in $O(n\log^3 n \log  \log n)$ time. 
The total number of elements in all data structures on level $s$ 
is $2^sn\log \log n$. But each data structure on the $r$-th recursion  level 
contains at most 
$n_r=n^l$ elements and can be constructed in $O(l^3\cdot n_r\log^{3} n \log \log n)$
 time where $l=(1+2\gamma)^r/2^{r}$. 
Hence, all data structure on the $r$-th recursion level can be constructed 
in $O((2^rl^3) n\log^3 n \log \log n)= O(((1+2\gamma)^{3r}/2^{2r}) n\log^3 n \log \log n)$ time. 
We can assume that $\eps <1$. Since we chose $\gamma$ so that 
 $(1+2\gamma) \leq  2^{\eps/6}$, $(1+2\gamma)^3 < 2$; hence,  
$(1+2\gamma)^{3r}/2^{2r}\leq 1/2^r$.
Then, all data structure on the $r$-th recursion level can be constructed 
in $O((1/2^r) n \log^3 n \log \log n)$ time. 
Summing up by all $r$, we see that all recursive data structures can be
 constructed in $O(n\log^3 n \log \log n)$ time.
\end{proof}


\begin{lemma}\label{lemma:two21}
There exists a data structure that answers (2,2,1)-sided queries on 
$[n]\times [n]\times [n]$ grid
in $O( (\log \log n)^3 + k)$ time, 
uses $O(n\log^{\eps} n)$ space, 
and can be constructed in $O(n\log^{3} n \log \log n)$ time. 
\end{lemma}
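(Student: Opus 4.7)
\begin{proofsk}
The plan is to mirror the divide-and-conquer construction of Lemma~\ref{lemma:two11}, upgrading the two kinds of boundary helper structures to the symmetric setting that a $(2,2,1)$-sided query requires. We partition the grid into $x$-slices $X_i$ and $y$-slices $Y_j$, each containing $n^{1/2+\gamma}$ points for a small constant $\gamma$, and form the cells $C_{ij}=X_i\cap Y_j$. At the top of the recursion we store three pieces. First, a coarse structure $D_t$ that stores, for every point $(x,y,z)\in P\cap C_{ij}$, the triple $(i,j,z)$; since the first two coordinates lie in $[n^{1/2-\gamma}]$, Lemma~\ref{lemma:muchspace} answers $(2,2,1)$-sided queries on $D_t$ in $O(1+k)$ time and $O(n)$ space. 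Second, on every $x$-slice and every $y$-slice we build the $(2,1,1)$-sided data structure of Lemma~\ref{lemma:two11}. Third, on every slice we also build a recursive $(2,2,1)$-sided structure, so that queries confined to a single slice can be descended into.

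For a query $Q=[a,b]\times[c,d]\times\halfleftsect{-\infty}{e}$ we locate indices $i_1,i_2$ whose slices contain $a$ and $b$, and $j_1,j_2$ analogously for $c$ and $d$. If $i_1=i_2$ or $j_1=j_2$ then $Q$ is contained in one slice and we recurse. Otherwise we partition $Q$ into the central block $Q_0=[x_{i_1},x_{i_2-1}]\times[y_{j_1},y_{j_2-1}]\times\halfleftsect{-\infty}{e}$, which is answered by $D_t$, together with four boundary strips, one each inside $X_{i_1}$, $X_{i_2}$, $Y_{j_1}$, $Y_{j_2}$. The key observation is that when a strip is restricted to the slice containing it, one of its ``closed'' sides becomes automatic because the slice boundary enforces it, so each strip becomes a genuine $(2,1,1)$-sided query on that slice and can be answered by the Lemma~\ref{lemma:two11} structure attached to it.

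For the query time, if $Q$ is multi-slice we pay $O(1)$ for $D_t$ plus four $(2,1,1)$ queries of cost $O((\log \log n)^3)$ each, and are done; if $Q$ is single-slice we spend only $O(1)$ to descend one level. Since the recursion depth is $O(\log \log n)$ and only the terminal multi-slice level contributes a $(\log \log n)^3$ term, the total is $O((\log \log n)^3+k)$. For space, every point appears in at most two slices per recursion level, so the $O(m\log^\eps m)$ cost of each slice's $(2,1,1)$ structure sums to $O(n\log^\eps n)$ per level. Applying reduction to rank space every $\delta\log \log n$ levels with $\delta=\eps/3$, exactly as in Lemma~\ref{lemma:two11}, bounds the bit-width of the coordinates enough that the geometric series over all $O(\log \log n)$ recursion levels collapses to $O(n\log^\eps n)$ words after re-absorbing the extra $\log \log n$ factor. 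The construction time obeys the recurrence $T(n)=O(n\log^3 n\log \log n)+2n^{1/2-\gamma}T(n^{1/2+\gamma})$, and since $(1/2+\gamma)^3<1/2$ it collapses to $O(n\log^3 n\log \log n)$.

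The main obstacle is to argue that adding the second two-sided direction does not inflate the recursion. This hinges on the observation that each boundary strip becomes a $(2,1,1)$ query inside its slice, and that only a single level of the recursion (the first multi-slice level reached) pays the $(\log \log n)^3$ cost while every other level contributes only $O(1)$. A secondary point is to verify that the $\log^\eps$ overhead of each slice-level $(2,1,1)$ structure, once summed over the slice decomposition and the $O(\log \log n)$ recursion levels, still fits into $O(n\log^\eps n)$ after rank-space reduction and a minor re-budgeting of the constant $\eps$.
\end{proofsk}
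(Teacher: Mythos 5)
Your proposal follows essentially the same route as the paper's proof: the same slice/cell decomposition with the coarse structure $D_t$ handled via Lemma~\ref{lemma:muchspace}, the same reduction of the four boundary strips to $(2,1,1)$-sided queries on their slices answered by Lemma~\ref{lemma:two11} structures, the same single-slice descent with $O(\log\log n)$ recursion depth, and the same rank-space reduction every $\delta\log\log n$ levels to control space. The argument is correct and matches the paper's in all essentials.
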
 
\begin{proof}
The proof technique is the same as in Lemma~\ref{lemma:two11}. 
The grid is divided into $x$-slices $X_i=[x_{i-1},x_i]\times n\times n$ 
and $y$-slices $Y_j=n\times [y_{j-1},y_j]\times n$  in the same way 
as in the proof of Lemma~\ref{lemma:two11}. 
Each $x$-slice $X_i$ supports $(2,1,1)$-sided queries open in $+x$ and $-x$ 
direction; each $y$-slice $Y_j$ supports $(2,1,1)$-sided queries open in 
$+y$ and $-y$ direction. 
All points are also stored in a data structure $D_t$ that 
contains a point $(i,j,z)$ for each point $(x,y,z)\in P \cap C_{ij}$. 
For every $x$-slice and $y$-slice there is a recursively defined data 
structure. The reduction to rank space technique is applied 
on every $\delta \log \log n$-th level in the same way as in the
 Lemma~\ref{lemma:two11}. 

Given a query $Q=[a,b]\times [c,d]\times \halfleftsect{-\infty}{e}$ 
we identify indices $i_1,i_2,j_1,j_2$ such that all 
cells $C_{ij}$, $i_1< i <i_2$ and $j_1<j<j_2$ are entirely contained 
in $Q$. Then $Q$ can be represented as a union of a query 
$Q_1=[a_0,b_0]\times [c_0,d_0]\times\halfleftsect{-\infty}{e}$ 
and four $(2,1,1)$-sided  queries 
$Q_2= \halfrightsect{a}{a_0} \times [c,d]\times \halfleftsect{-\infty}{e}$, 
$Q_3= \halfleftsect{b_0}{b}\times [c,d]\times\halfleftsect{-\infty}{e}$, 
$Q_4=[a_0,b_0]\times \halfrightsect{c}{c_0}\times \halfleftsect{-\infty}{e}$, 
and 
$Q_5=[a_0,b_0]\times \halfleftsect{d_0}{d} \times\halfleftsect{-\infty}{e}$, 
where $a_0=x_{i_1}$, $b_0=x_{i_2-1}$,
$c_0 =y_{j_1}$, and $d_0=y_{j_2-1}$. See Fig.~\ref{fig:exam221} for an
 example.  
The query $Q_1$ can be answered in constant time, and queries 
$Q_i$, $1<i\leq 5$, can be answered using the corresponding 
$x$- and $y$-slices. Since queries $Q_i$, $1 < i \leq 5$, are 
equivalent to (2,1,1)-sided queries each of those 
queries can be answered in $O((\log \log n)^3 + k)$ time. 

If the query $Q$ is entirely contained in one $x$-slice or one $y$-slice, 
then $Q$ is processed by a data structure for the corresponding 
$x$-slice resp.\ $y$-slice. Since the data structure consists of 
at most $v \log \log n$ recursion levels, the query can be transferred 
to a data structure for an $x$- or $y$-slice at most 
$v\log \log n$ times for $v= \frac{1}{\log (2/(1+2\gamma))}$. 
Hence, the total query time 
is $O(\log \log n + (\log \log n)^3 + k)= O((\log \log n)^3 + k)$.
The space usage and construction time are estimated in  the same way 
as in Lemma~\ref{lemma:two11}.
\begin{figure*}[t]
\centering
\includegraphics[width=80mm]{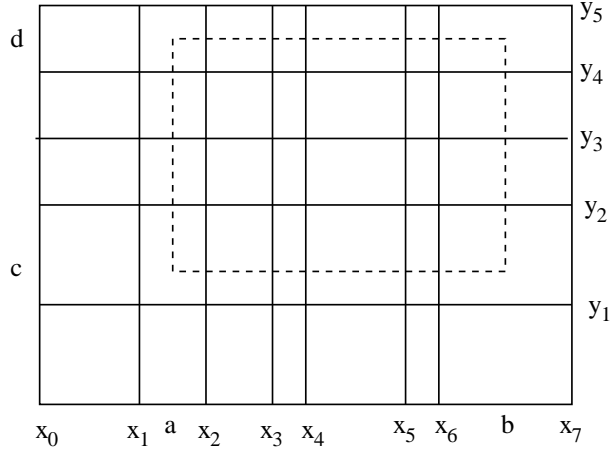}
\caption{\label{fig:exam221} Example of a (2,2,1)-sided  query projected 
onto the $xy$-plane. $i_1=2$, $i_2=7$, $j_1=2$, and $j_2=5$.}
\end{figure*}
\end{proof}

\begin{theorem}\label{theor:threedim1}
There exists a data structure that answers three-dimensional orthogonal 
range reporting queries on the $[U]\times [U]\times [U]$ grid  
in $O(\log \log U +(\log \log n)^3 + k)$ time, uses $O(n\log^{1+\eps} n)$ space, 
and can be constructed in $O(n\log^{4} n \log \log n)$ time. 
\end{theorem}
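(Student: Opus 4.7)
The plan is to assemble this theorem by first lifting the $(2,2,1)$-sided data structure of Lemma~\ref{lemma:two21} to a fully general $(2,2,2)$-sided (i.e., arbitrary orthogonal) data structure on the $[n]\times[n]\times[n]$ grid via Lemma~\ref{lemma:transf}, and then extending the universe from $[n]$ to $[U]$ on each coordinate axis by the reduction-to-rank-space technique outlined in Section~\ref{sec:prelim}.

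First I would apply Lemma~\ref{lemma:transf} with $(a_1,a_2,a_3)=(2,2,1)$ and $(b_1,b_2,b_3)=(2,2,2)$, so that $t=(b_1-a_1)+(b_2-a_2)+(b_3-a_3)=1$. Plugging in the parameters guaranteed by Lemma~\ref{lemma:two21} (query time $O((\log\log n)^3+k)$, space $O(n\log^{\eps}n)$, construction time $O(n\log^3 n\log\log n)$) this yields a data structure for arbitrary orthogonal range reporting on $[n]\times[n]\times[n]$ with the same query time $O((\log\log n)^3+k)$, space $O(n\log^{\eps}n\cdot\log^{t}n)=O(n\log^{1+\eps}n)$, and construction time $O(n\log^{3}n\log\log n\cdot\log n)=O(n\log^{4}n\log\log n)$.

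Next I would lift from the $[n]^3$ grid to the $[U]^3$ grid. As described in the preliminaries, it suffices to be able to map an arbitrary coordinate to its rank in $P_x$, $P_y$, or $P_z$ via predecessor/successor queries, since a query rectangle on $[U]^3$ is transformed into an equivalent query on $[n]^3$ by taking, for each of the six coordinates bounding the rectangle, the corresponding $\pred$ or $\successor$ in the appropriate coordinate set. For each of $P_x$, $P_y$, $P_z$ I would store a linear-space predecessor structure supporting $O(\log\log U)$-time queries (e.g., a y-fast trie), which costs only $O(n)$ additional space and does not affect the construction-time bound asymptotically. Performing the six predecessor/successor lookups per query adds an additive $O(\log\log U)$ term to the query time.

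Combining the two steps gives total query time $O(\log\log U+(\log\log n)^3+k)$, total space $O(n\log^{1+\eps}n)$, and construction time $O(n\log^{4}n\log\log n)$, matching the claim. There is no real obstacle here beyond bookkeeping; the only thing to be slightly careful about is to verify that the reduction-to-rank-space transformation commutes with the arbitrary $(2,2,2)$-sided query type (it does, since $\pred$ and $\successor$ handle both closed and half-open interval endpoints), and that the predecessor structure can indeed be built within the stated preprocessing budget.
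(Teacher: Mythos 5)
Your proposal is correct and follows essentially the same route as the paper: apply Lemma~\ref{lemma:transf} with $t=1$ to lift Lemma~\ref{lemma:two21} to general $(2,2,2)$-sided queries on $[n]^3$, then handle the $[U]^3$ universe by reduction to rank space using an $O(\log\log U)$-time predecessor structure (the paper cites van Emde Boas where you suggest a y-fast trie, an immaterial difference). The parameter bookkeeping for space, query time, and construction time matches the paper's proof.
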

\begin{proof}
The result for the $[n]\times [n]\times [n]$ grid 
 directly follows from Lemma~\ref{lemma:two21} and 
Lemma~\ref{lemma:transf}. We can obtain the result for the $[U]\times [U]\times [U]$ grid by applying the reduction to rank space
 technique~\cite{GBT84,Ch88}: 
We can use the van Emde Boas data structure~\cite{E77} to find $\pred(e,S)$ and $\successor(e,S)$ for any $e\in [U]$ in 
$O(\log \log U)$ time, where $S\subset [U]$ is $P_x$, $P_y$, or $P_z$. 
Hence, the query time is increased by an additive term $O(\log \log U)$ 
and the space usage remains unchanged. 
\end{proof}

Furthermore, we also obtain the result for $d$-dimensional 
range reporting, $d\geq 3$.
\begin{corollary}
There exists a data structure that answers $d$-dimensional orthogonal 
range reporting queries 
in $O(\log^{d-3} n/(\log \log n)^{d-6} + k)$ time, uses 
$O(n\log^{d-2+\eps} n)$ space, 
and can be constructed in $O(n\log^{d+1+\eps} n)$ time. 
\end{corollary}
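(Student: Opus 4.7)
The plan is to extend the three-dimensional data structure of Theorem~\ref{theor:threedim1} to dimension $d\geq 4$ by induction on $d$, adding one coordinate at a time by the standard range-tree reduction of Alstrup, Brodal, and Rauhe~\cite{ABR00}. The base case $d=3$ is Theorem~\ref{theor:threedim1}, and the inductive step increases query time by a factor of $O(\log n/\log\log n)$ and space by a factor of $O(\log n)$, matching the per-dimension overhead that ABR used to obtain their $O(\log^{d-2}n/(\log\log n)^{d-3}+k)$ time $d$-dimensional bound from a 2-d substructure.

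Concretely, for the step from dimension $d-1$ to $d$ I would build a weight-balanced $B$-tree $T$ on the multiset of $d$-th coordinates of the input, with fanout $\Theta(\log^{\eps'} n)$ for a small constant $\eps'$, and equip each internal node with a ``micro binary tree'' over its children so that one-dimensional range queries on the $d$-th coordinate decompose into $O(\log n)$ canonical subsets. At each canonical subset I store a $(d-1)$-dim data structure built inductively on the first $d-1$ coordinates of the points in the subset, applying the rank-space reduction of section~\ref{sec:prelim} so that the coordinates lie in $[m]\times\cdots\times[m]$ with $m$ the number of points in the subset (matching the grid hypothesis of the inductive statement) and so that each substructure is stored in space proportional to $m\log^{d-3+\eps}m$.

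A $d$-dim query is answered by identifying the canonical subsets of $T$ whose $d$-th coordinate range lies inside the query's $d$-th interval and issuing a $(d-1)$-dim query at each; ABR's fractional cascading on $T$ saves a $\log\log n$ factor on the total cost, so the per-dimension overhead for query time is $O(\log n/\log\log n)$ rather than $O(\log n)$. This yields
\[
q_d(n) \;=\; O(\log n/\log\log n)\cdot q_{d-1}(n) + O(k),
\]
which unrolled from $q_3(n)=O(\log\log U +(\log\log n)^3+k)$ gives the claimed $O(\log^{d-3}n/(\log\log n)^{d-6}+k)$, with the $\log\log U$ summand absorbed into the leading term for $d\geq 4$. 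For space, each point lies in $O(\log n)$ canonical subsets of $T$ (height $O(1/\eps')$ times $O(\log\log n)$ canonical subsets per ancestor in the micro binary tree), giving $s_d(n)=O(\log n)\cdot s_{d-1}(n)$ and hence $s_d(n)=O(n\log^{d-2+\eps}n)$ starting from $s_3(n)=O(n\log^{1+\eps}n)$. An analogous $O(\log n)$-per-dimension recurrence (each point is processed $O(\log n)$ times when the new level is built) grows $c_3(n)=O(n\log^4 n\log\log n)$ to $O(n\log^{d+1+\eps}n)$.

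The main obstacle is to verify that ABR's framework interfaces cleanly with Theorem~\ref{theor:threedim1} as the base substructure rather than a 2-d structure of their own: we must check that rank-space reduction at each canonical subset preserves the $[n]\times\cdots\times[n]$-grid hypothesis required by the inductive statement, and that $B$-tree navigation and fractional cascading within $T$ add only $O(\log n/\log\log n)$ time per query level, so that their contribution is subsumed by the leading term. Given these interface checks, which are essentially the same as in the original ABR construction, the query, space, and construction recurrences close to the stated bounds.
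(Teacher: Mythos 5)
Your proposal is correct and is essentially the paper's own argument: the paper proves this corollary in one line by lifting the three-dimensional structure of Theorem~\ref{theor:threedim1} through range trees with node degree $\log^{\eps} n$, deferring to~\cite{ABR00} and~\cite{N07} for the per-dimension overhead of $O(\log n/\log\log n)$ in query time and $O(\log n)$ in space, exactly as you spell out. Your elaboration of the canonical-subset decomposition, rank-space reduction at each subset, and the resulting recurrences is a faithful expansion of the details the paper leaves to the cited references.
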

\begin{proof}
We can obtain a $d$-dimensional data structure from a $(d-1)$-dimensional 
data structure using range trees with node degree $\log^{\eps} n$. 
See e.g.~\cite{ABR00},~\cite{N07} for details.
\end{proof}

\tolerance=2000
Using Theorem~\ref{theor:threedim1} we can reduce the space usage 
and update time of the semi-dynamic data structure for three-dimensional 
range reporting queries. 
\begin{corollary}\label{cor:semidyn}
There exists a data structure that uses $O(n\log^{1+\eps} n)$ space, 
and supports three-dimensional orthogonal range reporting queries  in 
$O(\log n (\log \log n)^2 + k)$ time and insertions in 
$O(\log^{5+\eps} n)$ time. 
\end{corollary}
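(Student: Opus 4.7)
I would convert the static structure of Theorem~\ref{theor:threedim1} into a semi-dynamic one by a modified Bentley--Saxe logarithmic method using \emph{fat} buckets with geometric growth factor $\beta=\log^{\eps}n$ rather than the usual factor $2$. The effect is to shrink the number of active buckets from $\Theta(\log n)$ to $L=\Theta(\log n/\log\log n)$, which is precisely what produces the $(\log\log n)^2$ factor in the claimed query bound while still admitting a polylogarithmic amortized insertion cost.

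Concretely, maintain $L$ copies $S_0,\ldots,S_{L-1}$ of the static structure of Theorem~\ref{theor:threedim1}, where $S_j$ is allowed to hold at most $\beta^{j+1}$ points. Each new point is inserted into $S_0$; whenever $S_j$ reaches its capacity, its contents are migrated into $S_{j+1}$, $S_j$ is emptied, and $S_{j+1}$ is rebuilt from scratch in $O(m\log^4 m\log\log m)$ time using Theorem~\ref{theor:threedim1}. A query is answered by querying every $S_j$ and returning the union of the $L$ answer sets.

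The remaining work is routine amortization. The capacities form a geometric sequence summing to $O(n)$, so space is $\sum_j O(|S_j|\log^{1+\eps}|S_j|)=O(n\log^{1+\eps}n)$. Query time is $L$ times the per-bucket cost of Theorem~\ref{theor:threedim1}, namely $L\bigl(O((\log\log n)^3)+O(\log\log U)\bigr)+O(k)=O(\log n(\log\log n)^2+k)$ under the mild assumption $\log\log U=O((\log\log n)^2)$ already implicit in the corollary. For insertions, one full cycle at level $j+1$ consists of $\beta$ rebuilds of $S_{j+1}$ of sizes $\beta^{j+1},2\beta^{j+1},\ldots,\beta^{j+2}$, whose total cost is $O(\beta^{j+3}\log^4 n\log\log n)$ and which is charged against the $\beta^{j+2}$ insertions that occurred during the cycle; the amortized level-$(j+1)$ cost is therefore $O(\beta\log^4 n\log\log n)$ per insertion, and summing over the $L$ levels yields an amortized insertion cost of $O(L\beta\log^4 n\log\log n)=O(\log^{5+\eps}n)$.

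The only mildly delicate point is the $L\cdot O(\log\log U)$ overhead in the query bound coming from the van Emde Boas machinery embedded in each $S_j$: since insertions change ranks globally, the $L$ buckets cannot share a single reduction-to-rank-space, so each bucket is simply kept in original coordinates and pays $O(\log\log U)$ per access individually, a cost that is absorbed into the claimed bound under the standing assumption on $U$. No other nontrivial obstacle arises, since the logarithmic-method glue, the correctness of taking the union of bucket answers, and (if desired) the standard global/local rebuilding trick for de-amortization are all by now entirely routine.
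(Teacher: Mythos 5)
Your construction is exactly the ``variant of the logarithmic method'' the paper invokes and delegates to~\cite{N07}: buckets growing by a factor $\log^{\eps}n$, hence $O(\log n/\log\log n)$ static copies of Theorem~\ref{theor:threedim1}, a $O(\log n/\log\log n)$-factor query overhead, and amortized insertion cost $O(\frac{c(n)}{n}\log^{1+\eps}n)=O(\log^{5+\eps}n)$ --- precisely the bounds the paper asserts without detail. Your write-up just supplies the routine amortization the paper omits, and your caveat about the per-bucket $O(\log\log U)$ predecessor cost is a legitimate point that the paper's statement of the corollary silently glosses over.
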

\begin{proof}
We can obtain the semi-dynamic data structure from the static data 
structure using a variant of the logarithmic method~\cite{B79}.
 A detailed description can be found in~\cite{N07}.
The space usage remains the same, the query time increases by a 
$O(\log n/\log \log n)$ factor, and the amortized insertion 
time is $O(\frac{c(n)}{n}\log^{1+\eps} n)$, where $c(n)$ is the construction 
time of the static data structure.
\end{proof}

The result of Corollary~\ref{cor:semidyn} can be also extended to 
$d>3$ dimensions using range trees. 

\section{Three-Dimensional Emptiness Queries}
\label{sec:3d2}
We can further reduce the space usage of the three-dimensional 
data structure if we allow $O(\log \log n)$ penalties for each point in 
the answer.  
Such a data structure can also be used to answer emptiness and one-reporting 
queries.
As in the previous section, we design space efficient data structures for 
$(2,1,1)$-sided and $(2,2,1)$-sided  queries. The proof is quite similar 
to the data structure of section~\ref{sec:3d1} but some parameters must 
be chosen in a slightly different way.   
\begin{theorem}\label{theor:penalty}
There exists a data structure that answers three-dimensional orthogonal 
range reporting queries on the $[U]\times [U]\times [U]$ grid
in $O(\log \log U + (\log \log n)^3 + k\log \log n)$ time, 
uses $O(n\log n (\log \log n)^3)$ space, 
and can be constructed in time $O(n\log^{4} n \log \log n)$ . 
\end{theorem}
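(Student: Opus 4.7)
The plan is to retrace the four-stage construction of Section~\ref{sec:3d1}, reusing its architecture but retuning several parameters so that an $O(\log\log n)$ penalty per reported point buys a smaller rank-space overhead. The four stages are identical in form: build a space-efficient $(1,1,1)$-sided dominance structure on the $[n]^3$ grid; lift it to $(2,1,1)$- and $(2,2,1)$-sided structures by the slab decompositions of Lemmas~\ref{lemma:two11} and~\ref{lemma:two21}; invoke Lemma~\ref{lemma:transf} with $t=1$ to obtain a general $(2,2,2)$-sided structure on $[n]^3$, contributing the outer $\log n$ factor in space; and finally use rank-space reduction with van Emde Boas predecessor search to extend to the $[U]^3$ grid, contributing the additive $O(\log\log U)$ term in the query time.

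For the dominance base case I would instantiate Lemma~\ref{lemma:tbound} with a single parameter $t=\Theta(\log n)$, yielding a linear-space structure that either returns an $O(\log n)$-size candidate list guaranteed to contain all dominators of $q$ or certifies that the answer has more than $\log n$ points; in the latter case one falls back to a linear-space $O(\log n+k)$-time structure such as that of~\cite{ChE87}. Filtering the candidate list by a lightweight auxiliary predecessor lookup contributes the $O(\log\log n)$ per-point penalty while keeping the fixed overhead $O((\log\log n)^2)$. The $(2,1,1)$- and $(2,2,1)$-sided liftings then proceed essentially verbatim as in Lemmas~\ref{lemma:two11} and~\ref{lemma:two21}: cells of $n^{1/2+\gamma}$ points, the coarse grid structure $D_t$ from Lemma~\ref{lemma:muchspace}, boundary-slice $(1,1,1)$-sided subqueries answered by the new penalty-bearing base case, and a single in-slab recursive subquery. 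The resulting recurrence $q(n)=O((\log\log n)^2)+q(n^{1/2+\gamma})+O(k\log\log n)$ resolves to $O((\log\log n)^3+k\log\log n)$ as required.

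The main obstacle is the space accounting. The target $O(n\log n(\log\log n)^3)$ is looser than in Theorem~\ref{theor:threedim1} by a factor of roughly $(\log\log n)^3/\log^\eps n$, so the rank-space schedule of Lemma~\ref{lemma:two11} cannot simply be reused with the same constants; in particular the $\log^\eps n$ slack exploited there is no longer available. The plan is to apply the rank-space reduction on every recursion level rather than in groups of $\delta\log\log n$ levels, and to choose $\gamma$ either as a smaller absolute constant or as a slowly decreasing function of $n$ such as $\gamma=\Theta(1/\log\log n)$, so that the number of recursion levels $v\log\log n$ stays $O(\log\log n)$ while the geometric factor $(1+2\gamma)^{v\log\log n}$ remains bounded by a small polylog of $\log n$. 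A careful per-level accounting, summing $2^r n$ points at roughly $((1+2\gamma)/2)^r\log n$ bits each and absorbing the auxiliary-dictionary overhead into the $(\log\log n)^3$ factor, gives $O(n(\log\log n)^3)$ words for the $(2,2,1)$ structure; Lemma~\ref{lemma:transf} then multiplies this by $\log n$ to reach the stated space bound. The construction-time analysis from Lemma~\ref{lemma:two11} carries over unchanged, yielding $O(n\log^3 n\log\log n)$ for the $(2,2,1)$ structure and $O(n\log^4 n\log\log n)$ after one application of Lemma~\ref{lemma:transf}.
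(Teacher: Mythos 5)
Your high-level architecture matches the paper's (same four stages, rank-space reduction applied on every recursion level, the per-point $O(\log\log n)$ penalty coming from unwinding the nested rank spaces), but the concrete parameter choices you offer for the space accounting do not work, and the paper resolves the difficulty by a structurally different modification. First, keeping multiplicative slab sizes $n^{1/2+\gamma}$ fails for both of your proposed choices of $\gamma$. For any fixed constant $\gamma>0$, level $r$ costs $\Theta(n\log n\,(1+2\gamma)^r)$ bits, and at $r=v\log\log n$ the factor $(1+2\gamma)^r$ is $\log^{\Theta(1)}n$, so the total is $n\log^{1+\Theta(1)}n$ bits --- polynomially (in $\log n$) above the target. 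For $\gamma=\Theta(1/\log\log n)$ the geometric factor is tamed, but then the coarse cell grid has $n^{1/2-\gamma}\times n^{1/2-\gamma}=n^{1-o(1)}$ cells, which violates the hypothesis $U_1U_2=O(n^{1-\eps})$ of Lemma~\ref{lemma:muchspace}, so the structure $D_t$ you rely on for the constant-time middle query no longer fits in $O(n)$ space. The paper's actual fix is different: slabs of size $n^{1/2}\log^p n$ (additive polylog slack rather than $n^{\gamma}$), a $D_t$ that stores only the point of minimal $z$-coordinate per nonempty cell (hence $n/\log^{2p}n$ elements, implemented via Lemma~\ref{lemma:transf} combined with Lemma~\ref{lemma:domin} rather than Lemma~\ref{lemma:muchspace}), together with per-cell lists $L_{ij}$ sorted by $z$ for enumerating the remaining points of a cell. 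This yields $s^r(n)=O(n^{1/2^r}\log^p n\sqrt{\log\log n})$ elements per level-$r$ structure and hence $O(n\log n(\log\log n)^2)$ bits per level, which is what makes the final word count come out to $O(n(\log\log n)^{2})$ for the $(2,1,1)$ structure and $O(n(\log\log n)^3)$ for the $(2,2,1)$ structure.

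Second, your dominance base case with a single threshold $t=\Theta(\log n)$ is too weak: Lemma~\ref{lemma:tbound}(ii) returns a candidate list of size $O(t)$ in $O(t+(\log\log n)^2)$ time, so when $q$ is dominated by, say, two points you still spend $\Theta(\log n)$ time scanning the list, giving $O(\log n+k)$ rather than $O((\log\log n)^2+k)$ per $(1,1,1)$-sided subquery; the "lightweight auxiliary predecessor lookup" is not a substitute for the geometric sequence of thresholds $t=2^{2i}$ that Lemma~\ref{lemma:domin} uses precisely to charge the scanning cost to the output size. The paper simply reuses Lemma~\ref{lemma:domin} unchanged in the slabs. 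You also omit the termination of the recursion (structures with at most $\log n$ elements are handed to the $O(\log n(\log\log n)^{1+\eps})$-bit structure of~\cite{ABR00}), which is needed to close the per-level space summation.
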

For completeness,  we provide 
the proof of Theorem~\ref{theor:penalty} in the Appendix.
Using the standard range trees and reduction to rank space techniques 
we can obtain a $d$-dimensional data structure for $d>3$
\begin{corollary}
There exists a data structure that answers $d$-dimensional orthogonal 
range reporting queries for $d>3$ 
in $O(\log^{d-3}n (\log \log n)^3 + k\log \log n)$ time, 
uses $O(n\log^{d-2} n (\log \log n)^3)$ space, 
and can be constructed in $O(n\log^{d+1} n \log \log n)$ time. 
\end{corollary}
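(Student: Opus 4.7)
The plan is to bootstrap Theorem~\ref{theor:penalty} from three to $d$ dimensions by a chain of $d-3$ nested binary range trees on the coordinates $x_4,\ldots,x_d$, preceded by a single reduction to rank space. First, using the reduction to rank space of Section~\ref{sec:prelim} together with a van Emde Boas tree on each of the $d$ coordinate sets, I would map the point set into $[n]^d$; this converts a query on $[U]^d$ into one on $[n]^d$ at additive cost $O(\log\log U)$, which in the natural regime $U=n^{O(1)}$ is absorbed into the $(\log\log n)^3$ factor already present in the base.

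Then, for $i=4,5,\ldots,d$ in turn, I would wrap the $(i-1)$-dimensional structure in a binary range tree $T_i$ on the $i$-th coordinate: every node $v$ of $T_i$ stores, recursively by the same scheme, an $(i-1)$-dimensional data structure on the points whose $x_i$-coordinate lies in $v$'s canonical range. The base of the recursion is the three-dimensional data structure of Theorem~\ref{theor:penalty}. A $d$-dimensional query $Q$ with $x_d$-slab $[a_d,b_d]$ is answered by locating the $O(\log n)$ canonical nodes of $T_d$ that partition $[a_d,b_d]$, and recursively querying the $(d-1)$-dimensional data structure stored at each such canonical node with the projection of $Q$ to the first $d-1$ coordinates.

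For the inductive analysis, let $s_i(n)$, $q_i(n)$, $c_i(n)$ denote space, non-output query time, and construction time in dimension $i$. Adding one binary range-tree layer multiplies each of these quantities by $O(\log n)$: every point appears in $O(\log n)$ canonical subtrees, a query visits $O(\log n)$ canonical nodes per layer, and the total input size of the $(i-1)$-dimensional sub-structures is $O(n\log n)$. Iterating $d-3$ times from the base values $s_3(n)=O(n\log n (\log\log n)^3)$, $q_3(n)=O((\log\log n)^3)$, and $c_3(n)=O(n\log^4 n \log\log n)$ supplied by Theorem~\ref{theor:penalty} yields exactly the claimed bounds $s_d(n)=O(n\log^{d-2} n(\log\log n)^3)$, $q_d(n)=O(\log^{d-3} n(\log\log n)^3)$, and $c_d(n)=O(n\log^{d+1} n\log\log n)$.

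The main point to check carefully, and the only nontrivial one, is that the per-output factor $O(\log\log n)$ does not accumulate across the $d-3$ nested range-tree layers. I would argue it as follows: in each $T_i$ the canonical nodes selected by the query form a disjoint cover of $Q$'s $x_i$-slab, so every output point $p$ lies in exactly one such canonical subtree at that level. By induction $p$ is therefore discovered through a unique descent through the nested structures and reported exactly once by the three-dimensional base data structure, which charges $O(\log\log n)$ for it. Summing over all $k$ outputs contributes the stated additive $O(k\log\log n)$ term to the query time.
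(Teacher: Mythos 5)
Your proposal is correct and follows essentially the same route as the paper, which disposes of this corollary in one line by invoking ``standard range trees and reduction to rank space'' on top of Theorem~\ref{theor:penalty}; your binary range trees on coordinates $4,\ldots,d$ account exactly for the extra $\log^{d-3}n$ factors in time, space, and construction, and your observation that each output point is reported by a unique base structure (so the $O(\log\log n)$ per-point penalty does not compound) is the right justification for the $k\log\log n$ term. One minor remark: the additive predecessor cost need not assume $U=n^{O(1)}$, since a plain binary search gives $O(\log n)=O(\log^{d-3}n)$ for $d>3$ regardless of $U$.
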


\section*{Appendix. Proof of Theorem~\ref{theor:penalty} }

\begin{lemma}\label{lemma:empt211}
There exists a data structure that answers (2,1,1)-sided queries 
on the $[n]\times [n]\times [n]$ grid in $O( (\log \log n)^3 + k\log \log n)$ 
time, 
uses $O(n (\log \log  n)^{2})$ space, 
and can be constructed in $O(n\log^{3} n \log \log n)$ time. 
\end{lemma}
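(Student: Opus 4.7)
My plan is to reuse the divide-and-conquer skeleton of Lemma~\ref{lemma:two11} essentially verbatim, making two changes sufficient to buy the stated space reduction at the price of a $\log\log n$ penalty per reported point. The first change replaces Lemma~\ref{lemma:domin} at each slab by an $O(m)$-space variant. Group the $m$ points of the slab along the $z$-axis into $m/\log\log m$ blocks of $\Theta(\log\log m)$ consecutive points each, install Lemma~\ref{lemma:domin} on one representative per block, and keep each block as a flat array. A $(1,1,1)$-sided query runs Lemma~\ref{lemma:domin} on the representatives and then linearly scans each matched block to recover the true dominating points; this yields $O(m)$ space and $O((\log\log m)^2+k\log\log m)$ query time per slab. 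The per-point factor here is what aggregates to the global $O(k\log\log n)$ term.

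The second change is to re-tune the slab parameter $\gamma$ from Lemma~\ref{lemma:two11} and to apply rank-space reduction at \emph{every} recursion level rather than every $\delta\log\log n$-th level. Choose $\gamma=\Theta(\log\log\log n/\log\log n)$; the recursion depth remains $L=\Theta(\log\log n)$, but the bit-count geometric sum becomes $\sum_{d=0}^L 2^d n\cdot\log m_d = n\log n\sum_{d=0}^L (1+2\gamma)^d$, which, once $\gamma$ is calibrated so that $(1+2\gamma)^L/(2\gamma)=\Theta((\log\log n)^2)$, telescopes to $O(n\log n\cdot(\log\log n)^2)$ bits, i.e.\ $O(n(\log\log n)^2)$ words. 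Everything else---the decomposition $Q=Q_1\cup Q_2\cup Q_3\cup Q_4$, the use of $D_t$ through Lemma~\ref{lemma:muchspace} for the interior query $Q_1$, the two boundary $(1,1,1)$-sided queries, and the recursive call on a $y$-slab---carries over unchanged. Summing the non-output time over the $L$ recursion levels still yields $O((\log\log n)^3)$, and the construction time bound matches Lemma~\ref{lemma:two11}'s.

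The main obstacle is that the hypothesis of Lemma~\ref{lemma:muchspace}, namely $U_1U_2=O(n^{1-\eps})$, is phrased with a \emph{constant} $\eps$, whereas the new tuning forces $\eps=2\gamma=\Theta(\log\log\log n/\log\log n)$. One must reopen the proof of Lemma~\ref{lemma:folk}/\ref{lemma:muchspace} and verify that the underlying trie still fits into $O(m)$ space when its degree is $m^{\Theta(\log\log\log n/\log\log n)}$ and its height is $\Theta(\log\log n/\log\log\log n)$, and that the extra $\Theta(\log\log n/\log\log\log n)$ query overhead this introduces in $D_t$ is harmlessly absorbed by the $O((\log\log n)^3)$ budget across the $L$ levels---essentially a routine recomputation of the constants hidden inside those two lemmas. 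Once that bookkeeping is done, the construction-time bound of $O(n\log^3 n\log\log n)$ is inherited directly from Lemma~\ref{lemma:two11}'s corresponding analysis, since the only new building blocks (bucketing and the slightly taller trie) each take linear time to assemble.
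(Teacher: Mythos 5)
Your high-level plan (keep the slab recursion, apply rank-space reduction at every level, accept an $O(\log\log n)$ per-point decoding/scanning penalty) matches the paper's, but both of your concrete modifications have real problems. The $z$-blocking gadget is the more serious one: a block of $\Theta(\log\log m)$ points that are consecutive in $z$-order but have arbitrary $x$- and $y$-coordinates admits no single representative whose answer to a dominance query determines whether the block contains a point dominating $q$. If you build Lemma~\ref{lemma:domin} on one actual point per block you miss dominating points in blocks whose representative fails the test; if you use a coordinate-wise extremum as a virtual representative you get false-positive blocks, and scanning a false-positive block costs $\Theta(\log\log m)$ without producing output, so the number of such blocks (which can be $\omega(k)$, even with $k=0$) destroys the output-sensitive bound. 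This gadget is also load-bearing in your space accounting: with your calibration $(1+2\gamma)^{L}/(2\gamma)=\Theta((\log\log n)^2)$ the geometric sum alone exhausts the $(\log\log n)^2$ budget, so if the slab structures revert to Lemma~\ref{lemma:domin}'s $O(m\log\log m)$ words you land at $O(n(\log\log n)^3)$ words, not the claimed $O(n(\log\log n)^2)$.

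The second issue is the one you flagged yourself: Lemma~\ref{lemma:muchspace} is proved only for constant $\eps$, and with $\gamma=\Theta(\log\log\log n/\log\log n)$ the range tree inside Lemma~\ref{lemma:folk} acquires superconstant height $\Theta(1/\gamma)$, so neither the $O(m)$ space nor the $O(k)$ query time survives a "routine recomputation of constants" --- you would have to redo that construction, and you have not. The paper sidesteps both difficulties at once with a different definition of $D_t$: it keeps the slab size at $n^{1/2}\log^{p}n$ (so the per-level bit count stays at $n\log n\log\log n$ with Lemma~\ref{lemma:domin} used \emph{unchanged} on the slices, which is exactly where its $(\log\log n)^2$ comes from), stores in $D_t$ only the minimal-$z$ point of each nonempty cell $C_{ij}$ (hence $O(n/\log^{2p}n)$ points, cheap enough to implement via Lemma~\ref{lemma:transf} plus Lemma~\ref{lemma:domin} in $O(n)$ space with no appeal to Lemma~\ref{lemma:muchspace}), and recovers the remaining points of each reported cell by scanning a $z$-sorted list $L_{ij}$. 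That representative-plus-sorted-list idea is the sound version of your blocking: it works because the cells are entirely contained in the $xy$-projection of $Q_1$, so every scanned list element up to the $z$-threshold is genuinely part of the answer. I would rework your proof around that cell-level construction rather than trying to patch the two gadgets above.
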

\begin{proof}
The data structure consists of the same components 
as the data structure of Lemma~\ref{lemma:two11}. 
But the size of $x$-slices and $y$-slices is reduced, so that 
each $x$-slice and each $y$-slice contains $n^{1/2}\log^p n$ 
points for a constant $p\geq 2$. 
The data structure $D_t$ contains a point $(i,j,z_{\min})$ for 
each cell $C_{ij}=X_i\cap Y_j$, $C_{ij}\cap P\not=\emptyset$, 
such that  $z_{\min}$ is the minimal $z$-coordinate of a point in 
$ C_{ij}\cap P$. 
The data structure $D_t$ can contain up to 
$n/\log^{2p} n$ elements. 
Combining the results of Lemma~\ref{lemma:transf} and Lemma~\ref{lemma:domin},
we can implement $D_t$ in $O((n/\log^{2p} n)\log n \log \log n)= O(n)$ space, 
so that queries are supported in $O((\log \log n)^2 + k)$ time. 
A list $L_{ij}$ contains all points in $C_{ij}$ sorted by their 
$z$-coordinates.  
For each $x$-slice $X_i$, there are two  data structures that 
support $(1,1,1)$-sided queries open in $+x$ and $-x$ direction. 
For each $y$-slice $Y_j$ there is a data structure for  $(1,1,1)$-sided 
queries open in $+y$ direction. 
For each $x$-slice and $y$-slice, there is a recursively defined data 
structure. 
As shown in Proposition~1 of~\cite{N05}, the total number 
of elements in a data structure on the $r$-th recursion level can 
be estimated as $s^r(n)=O(n^{1/2^r}\log^p n \sqrt{\log \log n})$. 
The recursive sub-division stops when a data structure contains no more than
$\log n$ elements. In this case, the data structure is implemented 
using e.g. the data structure of~\cite{ABR00}, so that queries 
are answered in $O(\log \log n)$ time and $O(\log n(\log \log n)^{1+\eps})$ 
space.

In the same way as in Lemma~\ref{lemma:two11}, the query $Q$ can be 
represented as a union of (at most) one (2,1,1)-sided 
query on $D_t$, two (1,1,1)-sided queries on $x$-slices, and 
one (2,1,1)-sided query on a recursively defined data structure 
for a $y$-slice. 
Hence, the query time is $O((\log \log n)^3)$ if we ignore the time 
we need to output  points in the answer.

Unlike the data structure of Lemma~\ref{lemma:two11}, we apply range 
reduction on every recursion level. 
Since the number of elements in a data structure on level $r$ 
is $s^r(n)=O(n^{1/2^r}\log^p n \sqrt{\log \log n})$, every element in 
a data structure on level $r$ can be represented with 
$\log(s^r(n))=O((1/2^r)\log n + \log \log n)$ bits. 
Each data structure on level $r$ uses $O(s^r(n) \log (s^r(n)) \log \log (s^r(n)) )= O(s^r(n)\log(s^r(n))\log \log n)$ bits. 
The total number of elements in all data structures on level $r$ 
is $O(n2^r)$. Hence, all level $r$ data structures need 
$O(n\log n\log \log n + n2^r (\log \log n)^2)$ bits. 
Summing up by all recursion levels, the total space usage is 
$O(n\log n (\log \log n)^2) + \sum_{r=1}^{r_{\max}-1}n2^r(\log \log n)^2$ 
bits. 
The maximum recursion level  $r_{\max}=\log \log n + c_r$ for a
constant $c_r$. Hence, the second term can be estimated 
as $\sum_{r=1}^{r_{\max}}n2^r(\log \log n)^2 = O(n\log n (\log \log n)^2)$.
If a data structure on the recursion level $r_{\max}$ contains $m$ 
elements, then it uses $O(m(\log \log n)^{1+\eps})$ words of space because 
$m\leq \log n$. All data structures on level $r_{\max}$ contain 
$O(n\log n)$ elements and use 
$O(n\log n (\log \log n)^{1+\eps})$ bits  of space. 
Thus the data structure uses $O(n(\log \log n)^{2})$ words of 
$\log n$ bits in total. 

The drawback of applying reduction to rank space on each recursion level is 
that we must pay a (higher than a constant) penalty for each point in the
 answer.
Consider a data structure $D_r$ on the $r$-th level of recursion, 
and let $P_r$ be the set of points stored in $D_r$.
Coordinates of any point stored in $D_r$ belong to the rank space 
of $P_r$. To obtain the point $p\in P$ that corresponds to a 
point $p_r\in P_r$ we need $O(r)=O(\log \log n)$ time.  
Hence, our data structure answers queries in 
$O((\log \log n)^3 + k\log \log n)$ time. 

The construction time can be estimated with the formula 
$$c(n)=O(n\log^3 n \log \log n) + 2(n^{1/2}/\log^p n)c(n^{1/2}\log^p n)$$ 
Therefore, $c(n)=O(n\log^3 n \log \log n)$.
\end{proof}

\begin{lemma}
There exists a data structure that answers (2,2,1)-sided queries 
on the $[n]\times [n]\times [n]$ grid in $O( (\log \log n)^3 + k\log \log n)$ time, 
uses $O(n (\log \log  n)^{3})$ space, 
and can be constructed in $O(n\log^{3} n \log \log n)$ time. 
\end{lemma}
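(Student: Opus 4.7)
The plan is to replay the construction of Lemma~\ref{lemma:two21} with the three modifications that turned Lemma~\ref{lemma:two11} into Lemma~\ref{lemma:empt211}: use slabs of size $n^{1/2}\log^p n$ for a sufficiently large constant $p$, keep only one representative $(i,j,z_{\min})$ per non-empty cell in $D_t$ together with per-cell $z$-sorted lists $L_{ij}$, and apply reduction to rank space on every recursion level.

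First I would set up exactly the same five-way decomposition $Q=Q_1\cup Q_2\cup Q_3\cup Q_4\cup Q_5$ as in Lemma~\ref{lemma:two21}, where $Q_1=[a_0,b_0]\times[c_0,d_0]\times\halfleftsect{-\infty}{e}$ spans complete cells and $Q_2,\ldots,Q_5$ are $(2,1,1)$-sided slab queries. Each slab query is answered by the $(2,1,1)$-sided structure of Lemma~\ref{lemma:empt211} attached to the corresponding slice. For $Q_1$, I would build $D_t$ to support $(2,2,1)$-sided queries by applying Lemma~\ref{lemma:transf} twice to Lemma~\ref{lemma:domin}; since $|D_t|\leq n/\log^{2p}n$, the extra $\log^2|D_t|$ space factor is absorbed by the $\log^{2p}n$ denominator. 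After $D_t$ returns its cell representatives I walk each list $L_{ij}$ in increasing $z$-order until exceeding $e$, which costs $O(1)$ per reported point. If $Q$ is entirely inside one slab, I recurse into the corresponding sub-structure.

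The query-time analysis is the one of Lemma~\ref{lemma:two21}. After at most $\log\log n+O(1)$ descents the five-way decomposition fires and contributes $O((\log\log n)^3)$, so the non-output cost is $O((\log\log n)^3)$. Because reduction to rank space is applied on each of the $O(\log\log n)$ recursion levels, reconstructing the original coordinates of a point from its deepest rank-space image costs $O(\log\log n)$, which gives the $k\log\log n$ penalty. The construction-time recurrence $c(n)=O(n\log^3 n\log\log n)+2(n^{1/2}/\log^p n)\,c(n^{1/2}\log^p n)$ is identical to the one in Lemma~\ref{lemma:empt211} and solves to $O(n\log^3 n\log\log n)$.

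The main obstacle is the space accounting, which must gain exactly one $\log\log n$ factor over the $O(n(\log\log n)^2)$ bound of Lemma~\ref{lemma:empt211}. Level $r$ contains $2^r n$ elements in total, and each element is stored inside a $(2,1,1)$-sided slab structure from Lemma~\ref{lemma:empt211}, which uses $(\log\log n)^2$ words per element. Rank reduction on every level forces every element at depth $r$ to occupy only $O((1/2^r)\log n+\log\log n)$ bits, so level $r$ contributes $O(n\log n(\log\log n)^2 + n2^r(\log\log n)^3)$ bits; summing over the $\log\log n+O(1)$ levels yields $O(n\log n(\log\log n)^3)$ bits, i.e., $O(n(\log\log n)^3)$ words. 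The delicate point is verifying that $D_t$ does not dominate: the two applications of Lemma~\ref{lemma:transf} inflate its per-element cost by $\log^2 n$, but $|D_t|=O(n/\log^{2p}n)$ absorbs this penalty for any $p\geq 2$.
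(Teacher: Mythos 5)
Your proposal is correct and follows essentially the same route as the paper, which simply combines the construction of Lemma~\ref{lemma:two21} (four $(2,1,1)$-sided slab structures per slice plus a cell-level structure for $Q_1$) with the modifications of Lemma~\ref{lemma:empt211} (slabs of size $n^{1/2}\log^p n$, one representative per cell with $z$-sorted lists, rank reduction at every level). You in fact supply two details the paper leaves implicit --- that $D_t$ must now absorb a $\log^2$ factor from Lemma~\ref{lemma:transf} applied with $t=2$, and the level-by-level bit count that produces the extra $\log\log n$ factor in space --- and both check out.
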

\begin{proof} 
The data structure is the same as in Lemma~\ref{lemma:empt211} 
but in each $x$-slice there are two data structures for 
$(2,1,1)$-sided queries open in $+x$ and $-x$ directions, 
and in each $y$-slice there are two data structures for 
$(2,1,1)$-sided queries open in $+y$ and $-y$ direction.

The query is processed in the same way as in Lemma~\ref{lemma:two21}. 
The space usage can be analyzed in the same way as in
 Lemma~\ref{lemma:empt211}. 
Construction time can be estimated with the formula 
$c(n)=O(n\log^3 n \log \log n) +  2(n^{1/2}/\log^p n)c(n^{1/2}\log^p n)$
 and $c(n)=O(n\log^3 n \log \log n)$. 
\end{proof}

Finally, we can apply Lemma~\ref{lemma:transf} and reduction to rank space 
and obtain the 
data structure for three-dimensional orthogonal range 
reporting queries on the $[U]\times [U]\times [U]$ grid. 
\end{document}